\crefname{section}{Sect.}{Sect.}
\Crefname{section}{Section}{Sections}
\crefname{figure}{Fig.}{Fig.}
\Crefname{figure}{Figure}{Figures}
\newcommand{\eg}{e.\,g.}
\newcommand{\cpplang}{\textsc{C++}\xspace}
\newcommand{\terms}{\mathcal{T}\xspace}
\newcommand{\booleans}{\mathcal{B}\xspace}
\newcommand{\rel}{\sqsubseteq}
\newcommand{\constr}[2]{\ensuremath{#1 \rel #2}}
\newcommand{\upvar}[1]{#1^\uparrow}
\newcommand{\downvar}[1]{#1^\downarrow}
\newcommand{\var}[1]{#1}
\newcommand{\utvars}{\ensuremath{\mathsf{V}^{\uparrow}}}
\newcommand{\dtvars}{\ensuremath{\mathsf{V}^{\downarrow}}}
\newcommand{\bvars}{\ensuremath{\mathsf{V}^b}}
\newcommand{\nil}{\ensuremath{\mathsf{nil}}}
\newcommand{\none}{\ensuremath{\mathsf{none}}}
\newcommand{\true}{\ensuremath{\mathsf{true}}}
\newcommand{\false}{\ensuremath{\mathsf{false}}}
\newcommand{\cspws}{\textsc{CSP-WS}}
\newcommand{\con}{\ensuremath{\mathcal{C}}}
\newcommand{\ssat}{\ensuremath{\mathsf{B}}}
\newcommand{\unsat}{\ensuremath{\mathsf{Unsat}}}
\newcommand{\wf}{\ensuremath{\textsf{WFC}}}
\newcommand{\sen}{\ensuremath{\textsf{SC}}}
\newcommand{\af}{\ensuremath{\textsf{AF}}}
\newcommand{\tbot}{\underline{\bot}}
\newcommand{\ttop}{\overline{\top}}
\newcommand{\el}[3]{%
    \mathsf{#1}\ifx$#2$\else(#2)\fi{:}~{#3}
}
\newcounter{listcount}\newcounter{totalcount}%
\newcommand{\tuple}[1]{%
  \setcounter{totalcount}{0}
  \renewcommand*{\do}[1]{\stepcounter{totalcount}}
  \docsvlist{#1}
  \setcounter{listcount}{0}
  \renewcommand*{\do}[1]{
    \stepcounter{listcount}
    ##1\ifnum\value{listcount}<\value{totalcount}\,\fi
  }
  \ensuremath{(\docsvlist{#1})}
}
\newcommand{\tlist}[2]{%
  \setcounter{totalcount}{0}
  \renewcommand*{\do}[1]{\stepcounter{totalcount}}
  \docsvlist{#1}
  \setcounter{listcount}{0}
  \renewcommand*{\do}[1]{
    \stepcounter{listcount}
    ##1\ifnum\value{listcount}<\value{totalcount},\else\fi
  }
  \ensuremath{[\docsvlist{#1}\ifx$#2$\else\,| #2\fi]}
}
\newcommand{\record}[2]{%
  \setcounter{totalcount}{0}
  \renewcommand*{\do}[1]{\stepcounter{totalcount}}
  \docsvlist{#1}
  \setcounter{listcount}{0}
  \renewcommand*{\do}[1]{
    \stepcounter{listcount}
    ##1\ifnum\value{listcount}<\value{totalcount}, \else\fi
  }
  \ensuremath{\{\docsvlist{#1}\ifx$#2$\else\,| #2\fi\}}
}
\newcommand{\choice}[2]{%
  \setcounter{totalcount}{0}
  \renewcommand*{\do}[1]{\stepcounter{totalcount}}
  \docsvlist{#1}
  \setcounter{listcount}{0}
  \renewcommand*{\do}[1]{
    \stepcounter{listcount}
    ##1\ifnum\value{listcount}<\value{totalcount},\else\fi
  }
  \ensuremath{{(}{:}\docsvlist{#1}\ifx$#2$\else\,| #2\fi{:}{)}}
}
\newcommand{\switch}[1]{%
  \setcounter{totalcount}{0}
  \renewcommand*{\do}[1]{\stepcounter{totalcount}}
  \docsvlist{#1}
  \setcounter{listcount}{0}
  \renewcommand*{\do}[1]{
    \stepcounter{listcount}
    ##1\ifnum\value{listcount}<\value{totalcount},\else\fi
  }
  \ensuremath{\langle\docsvlist{#1}\rangle}
}
\newcommand{\eel}[3]{%
    \mathsf{#1}\ifx$#2$\else(#2)\fi{:}~{#3}
}
\newcommand{\etuple}[1]{%
  \setcounter{totalcount}{0}
  \renewcommand*{\do}[1]{\stepcounter{totalcount}}
  \docsvlist{#1}
  \setcounter{listcount}{0}
  \renewcommand*{\do}[1]{
    \stepcounter{listcount}
    ##1\ifnum\value{listcount}<\value{totalcount}\,\fi
  }
  \ensuremath{(\docsvlist{#1})}
}
\newcommand{\etlist}[2]{%
  \setcounter{totalcount}{0}
  \renewcommand*{\do}[1]{\stepcounter{totalcount}}
  \docsvlist{#1}
  \setcounter{listcount}{0}
  \renewcommand*{\do}[1]{
    \stepcounter{listcount}
    ##1\ifnum\value{listcount}<\value{totalcount},\else\fi
  }
  \ensuremath{[\docsvlist{#1}\ifx$#2$\else\,| #2\fi]}
}
\newcommand{\erecord}[2]{%
  \setcounter{totalcount}{0}
  \renewcommand*{\do}[1]{\stepcounter{totalcount}}
  \docsvlist{#1}
  \setcounter{listcount}{0}
  \renewcommand*{\do}[1]{
    \stepcounter{listcount}
    ##1\ifnum\value{listcount}<\value{totalcount}, \else\fi
  }
  \ensuremath{\{\docsvlist{#1}\ifx$#2$\else\,| #2\fi\}}
}
\newcommand{\echoice}[2]{%
  \setcounter{totalcount}{0}
  \renewcommand*{\do}[1]{\stepcounter{totalcount}}
  \docsvlist{#1}
  \setcounter{listcount}{0}
  \renewcommand*{\do}[1]{
    \stepcounter{listcount}
    ##1\ifnum\value{listcount}<\value{totalcount},\else\fi
  }
  \ensuremath{{(}{:}\docsvlist{#1}\ifx$#2$\else\,| #2\fi{:}{)}}
}
\newcommand{\eswitch}[1]{%
  \setcounter{totalcount}{0}
  \renewcommand*{\do}[1]{\stepcounter{totalcount}}
  \docsvlist{#1}
  \setcounter{listcount}{0}
  \renewcommand*{\do}[1]{
    \stepcounter{listcount}
    ##1\ifnum\value{listcount}<\value{totalcount},\else\fi
  }
  \ensuremath{\langle\docsvlist{#1}\rangle}
}
\newcommand{\eupvar}[1]{\textsf{#1}^\uparrow}
\newcommand{\edownvar}[1]{\textsf{#1}^\downarrow}
\begin{document}

\title{A Constraint Satisfaction Method for Configuring Non-Local Service
Interfaces}

\author{Pavel Zaichenkov \and Olga Tveretina \and Alex Shafarenko}
\institute{%
    University of Hertfordshire, United Kingdom
}

\maketitle

\begin{abstract}
Modularity and decontextualisation are core principles of a service-oriented
architecture.  However, the principles are often lost when it comes to an
implementation of services, as a result of a rigidly defined service interface.
The interface, which defines a data format, is typically specific to
a particular context and its change entails significant redevelopment costs.
This paper focuses on a two-fold problem.  On the one hand, the interface
description language must be flexible enough for maintaining service
compatibility in a variety of different contexts without modification of the
service itself.  On the other hand, the composition of interfaces in
a distributed environment must be provably consistent.  The existing approaches
for checking compatibility of service choreographies are either inflexible
(WS-CDL and WSCI) or require behaviour specification associated with each
service, which is often impossible to provide in practice.

We present a novel approach for automatic interface configuration in
distributed stream-connected components operating as closed-source services
(i.e.\ the behavioural protocol is unknown).  We introduce a Message Definition
Language (MDL), which can extend the existing interfaces description languages,
such as WSDL, with support of subtyping, inheritance and polymorphism.  The MDL
supports configuration variables that link input and output interfaces of
a service and propagate requirements over an application graph.  We present an
algorithm that solves the interface reconciliation problem using constraint
satisfaction that relies on Boolean satisfiability as a subproblem.
\end{abstract}


\section{Introduction}

For the last decade service-oriented computing (SOC) has been a promising
technology facilitating development of large scale distributed systems.  SOC
allows enterprises to expose their internal business systems as services
available on the Internet.  On the other hand, clients can combine services and
reuse them for developing their own applications or constructing more complex
services.  Although web services continue to play an important role in modern
software development, a service composition is still a key challenge for SOC
and web services.  Web service composition empowers organisations to build
inter-enterprise software, to outsource software modules, and to provide an
easily accessible functionality for their customers.  Furthermore, service
composition reduces the cost and risks of new software development, because the
software elements that are represented as web services can be reused
repeatedly~\cite{sheng2014web}.

Web Service Description Language (WSDL) is an XML-based specification language
for describing service interfaces, which is a \emph{de facto} standard in
SOC\@.  Functionality and compatible data formats of the service are specified
in WSDL in the form of an interface.  The names and formats in the interfaces
of communicating services must exactly match for interface compatibility.
Today the environment in which services are developed and executed has become
more open, changing and dynamic, which requires  an adaptable and flexible
approach to service composition.  The choreography wired to specific WSDL
interfaces is too restrictive for dependable service composition.  The
choreography is statically bounded to specific operation names and types, which
impedes reusability of compound services and their interaction descriptions.

Reliable and dependable service composition remains a significant challenge
today~\cite{sheng2014web,dustdar2005survey,sheng2014behavior}.  Services are
provided autonomously by various organisations.  Developers of applications,
particularly safety-critical applications, such as health care, stock trading,
nuclear systems, must be able to check soundness and completeness of service
composition at early stages.  Therefore, model checking and verification of web
services is being actively researched
today~\cite{sheng2014behavior,bourne2012ensuring,zheng2013personalized}.

Web Services Choreography Description Language (WS-CDL)~\cite{web2002web} and
Web Service Choreography Interface (WSCI)~\cite{arkin2002web} are languages for
describing protocols from a global perspective.  This approach is based on
$\pi$-calculus that defines a behavioural semantics for concurrent processes.
An application designer writes a global description in WS-CDL or WSCI that
should be realisable by local protocols of communicating services.  Service
interfaces in WS-CDL are specified in WSDL\@.  The relation between service
interfaces connected with a communication channel is one-to-one, i.e.\ there is
no way to propagate data format requirements and capabilities across the
communication graph if services are not explicitly connected by a channel.
Moreover, \cite{barros2005critical} emphasises that the existing association
between WS-CDL and WSDL does not allow equivalent services with different WSDL
interfaces to be part of the choreography.

Session types is another approach based on $\pi$-calculus that assures
communication safety in distributed systems and in service choreographies
particularly~\cite{carbone2007structured}.  A choreography is defined as
a global protocol in terms of the interactions that are expected from the
protocol peers and a set of local protocols, one for each peer, which describes
the global protocol from the viewpoint of an individual peer.  The session
types require services to expose their behaviour as a protocol.  This
information is enough to define a communication type system, which is
well-suited for verifying runtime properties of the system such as
deadlock-freedom, interleaving, etc.  The session types essentially rely on
behavioural protocols, which in most cases are neither explicitly provided nor
can be derived from the code.

In this paper we present a formal method for configuring flexible interfaces
based on constraint satisfaction and SAT\@.  In contrast to the approaches
based on $\pi$-calculus, our method does not require services to define
a protocol, but only to specify the data interface.


\section{Motivating Example}
\label{sec:example}

Our approach for configuring web services is motivated by rapid development of
Cloud computing, social networks and Internet of Things, which accelerate the
growth and complexity of service
choreographies~\cite{bouguettaya2013advanced,duan2012survey,mathew2013web}.
Accordingly, we chose a simple but non-trivial example from one of those areas
to illustrate our approach.  The same example, known as the {\em three-buyer
use case}, is  often called upon to demonstrate the capabilities of session
types such as communication safety, progress and session fidelity
guarantees~\cite{coppo2015gentle,honda2008multiparty}.

\begin{figure}[t]
    \centering
    \begin{tikzpicture}[thick,node distance=2cm,minimum width=3em]
        \node[draw,rectangle] (a) {Alice};
        \node[draw,rectangle,right=of a] (s) {Seller};
        \node[draw,rectangle,right=of s] (b) {Bob};
        \node[draw,rectangle,right=of b] (c) {Carol};

        \draw[transform canvas={yshift=0.5ex},->] (a) to node[above,near start] {$AS_{\text{out}}$} node[above,near end] {$AS_{\text{in}}$} (s);
        \draw[transform canvas={yshift=-0.5ex},->,dashed] (s) to node[below,near start] {$SA_{\text{out}}$} node[below,near end] {$SA_{\text{in}}$} (a);

        \draw[transform canvas={yshift=0.5ex},->] (s) to node[above,near start] {$SB_{\text{out}}$} node[above,near end] {$SB_{\text{in}}$} (b);
        \draw[transform canvas={yshift=-0.5ex},->,dashed] (b) to node[below,near start] {$BS_{\text{out}}$} node[below,near end] {$BS_{\text{in}}$} (s);

        \draw[transform canvas={yshift=0.5ex},->] (b) to node[above,near start] {$BC_{\text{out}}$} node[above,near end] {$BC_{\text{in}}$} (c);
        \draw[transform canvas={yshift=-0.5ex},->,dashed] (c) to node[below,near start] {$CB_{\text{out}}$} node[below,near end] {$CB_{\text{in}}$} (b);
    \end{tikzpicture}
    \caption{Service composition in a Three Buyer usecase}
\label{fig:three-buyer}
\end{figure}

Consider a system involving buyers called Alice, Bob and Carol that cooperate
in order to buy a book from a Seller.  Each buyer is specified as an
independent service that is connected with other services via a channel-based
communication.  There is an interface associated with every input and output
port of a service, which specifies the service's functionality and data formats
that the service is compatible with.  The interfaces are defined in a Message
Definition Language (MDL) that is formally introduced in \cref{sec:mdl}.
\cref{fig:three-buyer} depicts composition of the application where Alice is
connected to Seller only and can interact with Bob and Carol indirectly.  $AS,
SB, BC, CB, BS, AS$ denote interfaces that are associated with service
input/output ports.  For brevity, we only provide $AS, SB$ and $BC$ (the rest
of the interfaces are defined in the same manner), which are specified in the
MDL as terms in the following way:
\begin{equation*}
    \scriptsize
    \openup-.5\jot
\begin{aligned}[c]
    AS_{\text{out}} = {(}{:}&\eel{request}{}{\erecord{\eel{title}{}{\edownvar{tv}}}{}},\\
                     &\eel{payment}{}{\erecord{\eel{title}{}{\edownvar{tv}},
                      \eel{money}{}{\mathsf{int}},
                      \eel{id}{}{\mathsf{int}}}{}},\\
                     &\eel{share}{x}{\erecord{\eel{title}{}{\edownvar{tv}},
                      \eel{money}{}{\mathsf{int}}}{}},\\
                             &\eel{suggest}{y}{\erecord{\eel{title}{}{\edownvar{tv}}}{}}{:}{)}\\
                      SB_{\text{out}} = {(}{:}&\mathsf{response}:~\{\eel{title}{}{\mathsf{string}},\eel{money}{}{\mathsf{int}}\}\\
                                              &|~\eupvar{ct1}{:}{)}\\
    BC_{\text{out}} = {(}{:}&\eel{share}{z}{\erecord{\eel{quote}{}{\mathsf{string}}, \eel{money}{}{\mathsf{int}}}{}}~|~\eupvar{ct2}{:}{)}
\end{aligned}
\left|
\begin{aligned}[c]
    AS_{\text{in}} = {(}{:}&\eel{request}{}{\erecord{\eel{title}{}{\mathsf{string}}}{}},\\
                            &\eel{payment}{}{\erecord{\eel{title}{}{\mathsf{string}},
                            \eel{money}{}{\mathsf{int}}}{}}\\
                            &|~\eupvar{ct1}{:}{)}\\
                            SB_{\text{in}} = {(}{:}&\mathsf{share}(z):~\{\eel{quote}{}{\mathsf{string}},\\
                                                   &\eel{money}{}{\mathsf{\mathsf{int}}}\},\\
                           &\mathsf{response}:~\{\eel{title}{}{\mathsf{string}},\eel{money}{}{\mathsf{int}}\}\\
                           &|~\eupvar{ct2}{:}{)}\\
    BC_{\text{in}} = {(}{:}&\mathsf{share}:~\{\eel{quote}{}{\mathsf{string}}, \eel{money}{}{\mathsf{\mathsf{int}}}\}{:}{)}
\end{aligned}
\right.
\end{equation*}

${(}{:}~{:}{)}$ delimit a collection of alternative label-record pairs called
\emph{variants}, where the label corresponds to the particular implementation
that can process a message defined by the given record.  A \emph{record}
delimited by ${\{}~{\}}$ is a collection of label-value pairs.  Collection
elements may contain Boolean variables called \emph{guards} (\eg{} $x$, $y$ or
$z$ in our example).  A guard instantiated to $\false$ excludes the
element from the collection.  This
is the main self-configuration mechanism: Boolean variables control the
dependencies between any elements of interface collections (this can be seen as
a generalised version of intersection types~\cite{davies2000intersection}) The
variables exclude elements from the collection if the dependencies between
corresponding elements in the interfaces that are connected by a communication
channel cannot be satisfied.

Parametric polymorphism is supported using interface variables, such as
$\edownvar{tv}$, $\eupvar{ct1}$ and $\eupvar{ct2}$ (the meaning of
$\uparrow$ and $\downarrow$ is explain in \cref{sec:mdl}).  Moreover, the
presence of $\eupvar{ct1}$ and $\eupvar{ct2}$ in both input and output
interfaces enables flow inheritance~\cite{grelck2008gentle} mechanism that
provides delegation of the data and service functionality across available
services.

$AS_\text{out}$ declares an output interface of Alice, which declares
functionality and a format of messages sent to Seller.  The service has the
following functionality:

\begin{itemize}
    \item Alice can \textsf{request} a book's price from Seller by providing
        a \textsf{title} of an arbitrary type (which is specified by a term
        variable $\edownvar{tv}$) that Seller is compatible with.  On the other
        hand, Seller declares that a title of type \textsf{string} is only
        acceptable, which means that $\edownvar{tv}$ must be instantiated to
        \textsf{string}.
    \item Furthermore, Alice can provide a \textsf{payment} for a book.  In
        addition to the \textsf{title} and the required amount of
        \textsf{money}, Alice provides her \textsf{id} in the message.
        Although Seller does not require the \textsf{id}, the interconnection
        is still valid (a description in standard WSDL interfaces would cause
        an error though) due to the subtyping supported in the MDL\@.
    \item Furthermore, Alice can offer to \textsf{share} a purchase between
        other customers.  Although Alice is not connected to Bob or Carol and
        may even not be aware of their presence (the example illustrates
        a composition where some service communicates with services that the
        service is not directly connected with), our mechanism detects that Alice
        can send a message with ``\textsf{share}'' label to Bob by bypassing it
        implicitly through Seller.  In order to enable inheritance in Seller's
        service, the mechanism sets a tail variable $\eupvar{ct1}$ to ${(}{:}
        \eel{share}{}{\{\eel{title}{}{\textsf{string}},
        \eel{money}{}{\textsf{int}}\}}{:}{)}$.  If Bob were unable to
        accept a message with ``\textsf{share}'' label, the mechanism would
        instantiate $x$ with $\false$, which automatically removes the
        corresponding functionality from the service.
    \item Finally, Alice can \textsf{suggest} a book to other buyers.  However,
        examination of other service interfaces shows that there is no service
        that can receive a message with the label ``\textsf{suggest}''.
        Therefore, a communication error occurs if Alice decides to send the
        message.  To avoid this, the configuration mechanism excludes
        ``\textsf{suggest}'' functionality from Alice's service by setting
        $y$ variable to $\false$.
\end{itemize}

The proposed configuration mechanism analyses the interfaces of services
Seller, Bob and Carol in the same manner.  The presence of $\eupvar{ct1}$
variable in both input and output interfaces of Bob enables support of data
inheritance on the interface level.  Furthermore, the Boolean variable $z$
behaves as an intersection type: Bob has ``purchase sharing'' functionality
declared as an element $\el{share}{z}{\{\dots\}}$ in its input interface
$SB_\text{in}$ (used by Seller).  The element is related to the element
$\el{share}{z}{\{\dots\}}$ in its output interface $BC_\text{out}$ (used by
Carol).  The relation declares that Bob provides Carol with ``sharing''
functionality only if Bob was provided with the same functionality from Seller.
In our example, $z$ is $\true$, because Carol declares that it can receive
messages with the label ``\textsf{share}''.  Note that there could be an any
Boolean formula in place of $z$, which wires any input and output interfaces of
a single service in an arbitrary way.  The existing interface description
languages (WSDL, WS-CDL, etc.) do not support such interface wiring
capabilities.

Interface variables provide facilities similar to \cpplang{} templates.
Services can specify a generic behaviour compatible with multiple contexts and
input/output data formats.  Given the context, the compiler then specialises
the interfaces based on the requirements and capabilities of other services.

The problem being solved is similar to type inference problem; however, it has
large combinatorial complexity and, therefore, direct search of a solution is
impractical.  Furthermore, additional complexity arises from the presence of
Boolean variables in general form.  Another problem is potential cyclic
dependencies in the network, which prevent the application of a simple forward
algorithm.  In our approach, we define our problem as a constraint satisfaction
problem.  Then we employ a constraint solver, which was specifically developed
to solve this problem, to find correct instantiations of the variables.

\section{Message Definition Language and CSP}
\label{sec:mdl}

Now we define a term algebra called Message Definition Language (MDL).  The
purpose of the MDL is to describe flexible service interfaces.  Although we use
a concise syntax for MDL terms that is different from what standard WSDL-based
interfaces look like, it can easily be rewritten as a WSDL extension.

In our approach, a message is a collection of data entities, each specified by
a corresponding \emph{term}.
The intention of the term is to represent
\begin{enumerate}
\item a standard atomic type such as \texttt{int}, \texttt{string}, etc.;
\item inextensible data collections such as tuples;
\item extensible data
    records~\cite{gaster1996polymorphic,leijen2005extensible}, where
    additional named fields can be introduced without breaking the match
    between the producer and the consumer and where fields can also be
    inherited from input to output records by lowering the output type, which
    is always safe;
\item  data-record variants, where generally more variants can be accepted by
    the consumer than the producer is aware of, and where such additional
    variants can be inherited from the output back to the input of the producer
    --- hence contravariance --- again, by raising the input type, which is
    always safe, too.
\end{enumerate}

\subsection{Terms}

Each term is either atomic or a collection in its own right.  Atomic terms are
\emph{symbols}, which are identifiers used to represent standard types such as
\texttt{int}, \texttt{string}, etc.  To account for subtyping we include three
categories of collections: \emph{tuples} that are demanded to be of the same
size and thus admit only depth structural subtyping, \emph{records} that are
subtyped covariantly (a larger record is a subtype) and \emph{choices} that are
subtyped contravariantly using set inclusion (a smaller choice is a subtype).

In order to support parametric polymorphism and inheritance in interfaces, we
introduce term variables (called later \emph{t-variables}), which are similar
to type variables. For coercion of interfaces it is important to distinguish
between two variable categories: \emph{down-coerced} and \emph{up-coerced}
ones. The former can be instantiated with symbols, tuples and records (terms of
these three categories are call down-coerced terms), and the latter can only be
instantiated with choices (up-coerced terms). Informally, for two down-coerced
terms, a term associated with a structure with ``more data'' is a subtype  of
the one associated with a structure that contains less; and vice versa for
up-coerced terms.  We use the notation $\downvar{v}$ and $\upvar{v}$ for
down-coerced and up-coerced variables respectively, and $v$ when its coercion
sort is unimportant. Explicit sort annotation on variables is useful for
simplifying partial order definitions on terms.



We introduce Boolean variables (called  \emph{b-variables} below) in the term
interfaces to specify dependencies between input and output data formats.
B-variables provide functionality similar to intersection types, which increase
the expressiveness of function signatures.

A Boolean expression $b \in \booleans$ ($\booleans$ denotes a set of Boolean
expressions) called a guard is defined by the following grammar: \vspace{-.4em}
\begin{grammar}
    <guard> ::= (<guard> $\land$ <guard>) | (<guard> $\lor$ <guard>)
               | <guard> $\to$ <guard> | $\lnot$<guard> | "true" | "false" | $b$-$variable$
\end{grammar}
\vspace{-.4em}

MDL terms are built recursively using the constructors: tuple, record, choice
and switch, according to the following grammar:
\vspace{-.4em}
\begin{grammar}
    <term> ::=  <symbol> | <tuple> | <record> | <choice> | t-variable

    <tuple> ::= "("<term> $[$<term>$]^*$")"

    <record> ::= "\{"$[$<element>$[$","<element>$]^* [$"|"down-coerced t-variable$]]$"\}"

    <choice> ::= "(:"$[$<element>$[$","<element>$]^*[$"|"up-coerced t-variable$]]$":)"

	<element> ::= <label>"("<guard>"):"<term>
	
    <label> ::= <symbol>

\end{grammar}
\vspace{-.4em}

Informally, a \emph{tuple} is an ordered collection of terms and
a \emph{record} is an extensible, unordered collection of guarded labeled
terms, where \emph{labels} are arbitrary symbols, which are unique within
a single record.  A \emph{choice} is a collection of alternative terms.  The
syntax of choice is the same as that of record except for the delimiters.  The
difference between records and choices is in width subtyping and will become
clear below when we define seniority on terms.  We use choices to represent
polymorphic messages and service interfaces on the top level.  Records and
choices are defined in \emph{tail form}.  The tail is denoted by a t-variable
that represents a term of the same kind as the construct in which it occurs.

A switch is an auxiliary construct intended for building conditional terms,
which is specified as a set of unlabelled (by contrast to a choice) guarded
alternatives.  Formally, it is defined as
\vspace{-.4em}
\begin{grammar}
<switch> ::= "<"<guard>":"<term>$[$", "<guard>":"<term>$]^*$">"
\end{grammar}
\vspace{-.4em}

Exactly one guard must be \true{} for any valid switch, i.e. the switch is
substitutionally equivalent to the term marked by the \true{} guard:
\[
    \switch{\el{}{\false}{t_1},\dots,\el{}{\true}{t_i},\dots,\el{}{\false}{t_n}} = \switch{\el{}{\true}{t_i}} = t_i\text{.}
\]

For example, $\switch{\el{}{a}{int}, \el{}{\lnot a}{string}}$ represents the
symbol $int$ if $a = \true$, and the symbol $string$ otherwise.

\subsection{Seniority Relation}

For a guard $g$, we denote as  $\bvars(g)$  the set of b-variables that
occur in $g$.  For a term $t$, we denote as $\utvars(t)$ the set of
up-coerced t-variables that occur in $t$,  and as $\dtvars(t)$ the set of
down-coerced ones; and finally $\bvars(t)$ is  the set of b-variables in $t$.

\begin{definition}[Semi-ground and ground terms]
    A term $t$ is called semi-ground if $\utvars(t) \cup \dtvars(t)
    = \emptyset$.  A term $t$ is called ground if it is semi-ground and
    $\bvars(t) = \emptyset$.
\end{definition}

\begin{definition}[Well-formed terms]
\label{defn:well-formed-term}
    A term $t$ is well-formed if it is ground and exactly one of the following
    holds:
    \vspace{-.8em}
    \begin{enumerate}
        \item $t$ is a symbol;
        \item $t$ is a tuple $\tuple{t_1 \dots t_n}$, $n > 0$, where all
            $t_i$, $1 \leq i \leq n$, are well-formed;
         \item $t$ is a record
             $\record{\el{l_1}{g_1}{t_1},\dots,\el{l_n}{g_n}{t_n}}{}$ or
             a choice $\choice{\el{l_1}{g_1}{t_1},\dots,\el{l_n}{g_n}{t_n}}{}$,  $n \geq 0$,
             where for all $1 \leq i\ne j \leq n$, $g_i\wedge g_j \implies l_i
             \neq l_j$ and all $t_i$ for which $g_i$ are \true{} are
             well-formed;
        \item $t$ is a switch $\switch{\el{}{g_1}{t_1},
            \dots,\el{}{g_n}{t_n}}$, $n>0$, where for some  $1\leq  i\leq n$,
            $g_i=\true$ and $t_i$ is well-formed and where $g_j=\false$ for all
            $j \neq i$.
    \end{enumerate}
\end{definition}

If an element of a record, choice or switch has a guard that is equal to
\false, then the element can be omitted, \eg{}
\[
    \record{\el{a}{x \land y}{string}, \el{b}{\false}{int},\eel{c}{x}{int}}{} = \record{\el{a}{x \land y}{string}, \eel{c}{x}{int}}{}\text{.}
\]
If an element of a record or a choice has a guard that is \true, the guard can
be syntactically omitted, \eg{}
\[
    \record{\el{a}{x \land y}{string}, \el{b}{\true}{int},\el{c}{x}{int}}{} = \record{\el{a}{x \land y}{string}, \el{b}{}{int},\el{c}{x}{int}}{}\text{.}
\]
We define the \emph{canonical form} of a well-formed collection as
a representation that does not include \false{} guards, and we omit \true{}
guards anyway.  The canonical form of a switch is its (only) term with
a \true{} guard, hence any term in canonical form is switch-free.

Next we introduce a seniority relation on terms for the purpose of structural
subtyping.  In the sequel we use $\nil$ to denote the empty record
$\{\enskip\}$, which has the meaning of unit type and represents a message
without any data.  Similarly, we use $\none$ to denote the empty choice
$(:\enskip:)$.

\begin{definition}[Seniority relation]
\label{def:seniority}
    The seniority relation $\rel$ on well-formed terms is defined in canonical
    form as follows:
    \vspace{-.8em}
    \begin{enumerate}
        \item $\none \rel t$ if $t$ is a choice;
        \item $t \rel \nil$ if $t$ is a symbol, a tuple or a record;
        \item $t \rel t$;
        \item $t_1 \rel t_2$, if for some $k,m>0$ one of the following holds:
        \begin{enumerate}
            \item $t_1 = \tuple{t^1_1,\dots,t^k_1}$, $t_2
                = \tuple{t^1_2,\dots,t^k_2}$ and $t^i_1 \rel t^i_2$ for each
                $1\leq i \leq k$;

            \item $t_1
                = \record{\el{l^1_1}{}{t^1_1},\dots,\el{l^k_1}{}{t^k_1}}{}$ and
                $t_2
                = \record{\el{l^1_2}{}{t^1_2},\dots,\el{l^m_2}{}{t^m_2}}{}$,
                where $k \geq m$ and for each $j \leq m$ there is $ i \leq k$
                such that $l^i_1 = l^j_2$ and $t^i_1 \rel t^j_2$;

            \item $t_1
                = \choice{\el{l^1_1}{}{t^1_1},\dots,\el{l^k_1}{}{t^k_1}}{}$ and
                $t_2
                = \choice{\el{l^1_2}{}{t^1_2},\dots,\el{l^m_2}{}{t^m_2}}{}$,
                where $k \leq m$ and for each $i \leq k$ there is $ j \leq m$
                such that $l^i_1 = l^j_2$ and $t^i_1 \rel t^j_2$.
        \end{enumerate}
    \end{enumerate}
\end{definition}

\begin{figure}[t]
    \begin{center}
        \begin{tikzpicture}[node distance=.8cm,line width=.1mm]
    \node (nil) at (0,0) {\nil};
    \node [below of=nil] (tuple) {tuple};

    \node [right of=tuple,node distance=1.2cm] (record) {record};
    \node [below of=record] (recordEnd) {$\dots$};

    \node [left of=tuple,node distance=1.2cm] (symbol) {symbol};

    \node [right of=record,node distance=1.2cm] (choice) {choice};
    \node [above of=choice] (choiceEnd) {$\dots$};
    \node [below of=choice] (none) {\none};

    \node (dummy1) at (-2.5, 0) {};
    \node [below] (dummy2) at (-2.5, -1.5) {{\em subtype}};

    \draw (nil) -- (symbol);
    \draw (nil) -- (tuple);
    \draw (nil) -- (record);
    \draw (choice) -- (none);

    \draw (record) -- (recordEnd);
    \draw (choice) -- (choiceEnd);

    \draw (choice) -- (choiceEnd);

    \draw[ultra thin,->] (dummy1) -- (dummy2);

\draw [
    thick,
    decoration={
        brace,
        mirror,
        raise=1cm
    },
    decorate
] (symbol.west) -- (record.east) 
node [pos=0.5,anchor=north,yshift=-1.1cm] {down-coerced terms}; 

\draw [
    thick,
    decoration={
        brace,
        raise=1.1cm
    },
    decorate
] (choice.west) -- (choice.east) 
node [pos=0.5,anchor=north,yshift=1.7cm] {up-coerced terms}; 

\end{tikzpicture}
    \end{center}
    \caption{Two semilattices representing the seniority relation for terms of
    different categories.  The lower terms are the subtypes of the upper ones}
\label{fig:semilattice}
\end{figure}

Similarly to the t-variables, terms are classified into two categories:
symbols, tuples and records are down-coerced terms and choices are up-coerced
terms.  The seniority relation defines a symmetric relation on down-coerced and
up-coerced terms: an element $\nil$ is the maximum element for down-coerced
terms; on the other hand, $\none$ is the minimum element for up-coerced terms.
$\downvar{\terms}$ denotes the set of all down-coerced ground terms,
$\upvar{\terms}$ denotes the set of all up-coerced ground terms and $\terms
= \downvar{\terms} \cup \upvar{\terms}$ is the set of all ground terms.
Similarly, $\downvar{\terms}_m$ denotes the set of all vectors of down-coerced
ground terms of length $m$ and $\upvar{\terms}_n$ denotes the set of all
vectors of up-coerced ground terms of length $n$.  If $\vec{t_1}$ and
$\vec{t_2}$ are vectors of terms $(t^1_1, \dots, t^1_n)$ and $(t^2_1, \dots,
t^2_n)$ of size $n$, then $\vec{t_1} \rel \vec{t_2}$ denotes the seniority
relation for all pairs $t^1_i \rel t^2_i$ ($1 \leq i \leq n$).

\begin{proposition}
\label{cor:semilattice}
    The seniority relation $\rel$ is a partial order, and $(\terms,
    \rel)$ is a pair of meet and join semilattices (\cref{fig:semilattice}):
    \begin{align*}
        \forall t_1, t_2 \in \downvar{\terms}, t_1 \rel t_2 &\text{ iff } t_1 \sqcap t_2 = t_1;\\
        \forall t_1, t_2\in \upvar{\terms}, t_1 \rel t_2 &\text{ iff } t_1 \sqcup t_2 = t_2.
    \end{align*}
\end{proposition}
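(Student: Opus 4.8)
The plan is to first observe that $\rel$ never crosses the two term categories: by \cref{def:seniority} the only clauses linking them are $\none \rel t$ for choices and $t \rel \nil$ for symbols, tuples and records, and since $\none$ is itself a choice and $\nil$ a record, no down-coerced term is ever related to an up-coerced one. Hence $(\terms,\rel)$ is the disjoint union of its restrictions to $\downvar{\terms}$ and $\upvar{\terms}$, and it suffices to treat each block on its own, the $\upvar{\terms}$ block being handled by the analogous argument with choices, union of variants and $\none$ playing the roles of records, union of fields and $\nil$. I would begin with the partial order. Reflexivity is immediate from clause (3). For antisymmetry and transitivity I would induct on term size: in the record case (4b), $t_1\rel t_2$ and $t_2\rel t_1$ force $k=m$ and make the two label-matchings mutually inverse, so uniqueness of labels in a well-formed record (\cref{defn:well-formed-term}) together with the induction hypothesis gives equality field by field; tuples (4a) are componentwise and symbols, $\nil$, $\none$ are base cases; transitivity composes the two matchings and applies the hypothesis to the chained subterm relations, with the choice case (4c) symmetric.

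Next I would exhibit the lattice operations by a simultaneous structural recursion over both categories — simultaneous because a record may contain a choice and a choice a record. For the meet I set $t\sqcap\nil=t$, $a\sqcap a=a$ on a symbol $a$, the componentwise meet on equal-arity tuples, the union of the two label sets with $\sqcap$ on each shared field for records, and the intersection of variants (with $\sqcap$ on the shared ones) for choices; the join is defined dually, unioning variants on choices and intersecting fields on records, with $\none$ as identity. I would then prove by induction that $t_1\sqcap t_2$ is the greatest lower bound — it carries every field of each operand with field terms below by the hypothesis, and any common lower bound must carry all those labels and sit below it on each — and dually that the choice join is the least upper bound. Here the asymmetry the proposition is pointing at appears: a union of variants is always itself a choice, so $\sqcup$ is total and $(\upvar{\terms},\rel)$ is a join-semilattice outright, whereas the meet of two incompatible down-coerced terms (distinct symbols, a symbol against a record, or tuples of unequal arity) has no common lower bound among ordinary terms. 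To totalise $\sqcap$ and make $(\downvar{\terms},\rel)$ a genuine meet-semilattice one must send such pairs to the adjoined bottom $\tbot$, which is exactly the role that element plays; the dual $\ttop$ would only be needed to totalise meets of choices, which this statement does not ask for.

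With the operations in place the two displayed equivalences are the routine order characterisations: if $t_1\rel t_2$ then $t_1$ is a lower bound of $\{t_1,t_2\}$ dominating every other, so $t_1\sqcap t_2=t_1$, and conversely $t_1\sqcap t_2=t_1$ exhibits $t_1$ as $\rel t_2$, the $\upvar{\terms}$ case being identical with $\sqcup$ and $\none$. I expect the main obstacle to be the record and choice case of the greatest-lower-bound / least-upper-bound induction, where the label bookkeeping is genuinely delicate: the meet unions label sets while the join intersects them, the definition is asymmetric in $k$ and $m$, and one must verify that the recombined collection is still well-formed (labels unique, field terms well-formed) before the induction hypothesis may be applied. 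The second, more conceptual point to pin down is precisely the partiality just noted — recognising that incompatibility surfaces as $\tbot$ only at clashing symbols or arities in down-coerced positions and is otherwise absorbed (into $\none$ inside a choice field, into $\nil$ under a join) — which is what turns the informal picture of \cref{fig:semilattice} into a precise statement.
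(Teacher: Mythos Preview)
The paper does not supply a proof of \cref{cor:semilattice}: the proposition is stated and immediately followed by the running commentary on subtyping, with no argument given. So there is no ``paper's own proof'' against which to measure your attempt.

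Your sketch is a reasonable reconstruction of what such a proof would have to contain. The separation into the two blocks $\downvar{\terms}$ and $\upvar{\terms}$ is justified exactly as you say, the induction on term size for antisymmetry and transitivity is the natural route, and your explicit recursive definitions of $\sqcap$ and $\sqcup$ (union of fields for record meets, intersection for record joins, and dually for choices) are the standard ones. You are also right to flag the partiality issue: two distinct symbols, or a symbol and a record, or tuples of different arity, have no common lower bound in $\downvar{\terms}$ as defined, so the claim that $(\downvar{\terms},\rel)$ is a meet-semilattice is literally true only after adjoining $\tbot$, which the paper does a few paragraphs later when it passes to $\downvar{\widetilde{\terms}}_m$. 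That observation is not a defect in your argument but a gap in the proposition's statement that you have correctly located; the displayed biconditionals themselves are unaffected, since whenever $t_1 \rel t_2$ holds the meet certainly exists and equals $t_1$.
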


The seniority relation represents the subtyping relation on terms.  If
a term $t$ describes the input interface of a service, then the service
can process any message described by a term $t'$, such that $t' \rel t$.

Although the seniority relation is straightforwardly defined for ground terms,
terms that are present in the interfaces of services can contain t-variables
and b-variables.  Finding such ground term values for the t-variables and such
Boolean values for the b-variables that the seniority relation holds represents
a CSP problem, which is formally introduced next.

\subsection{Constraint Satisfaction Problem for Web Services}
\label{sec:csp}

We define a substitution, which is used in the definition of the CSP and in the
algorithm, as a syntactic transformation that replaces b-variables with Boolean
values and t-variables with ground or semi-ground values.
\begin{definition}[Substitution]
    \label{def:substitution}
    Let $g$ be a guard, $t$ be a term, $k = |\bvars(g) \cup \bvars(t)|$, and
    $\vec{f} = (f_1,\dots,f_k)$ be a vector of b-variables contained in $g$ and
    $t$, and $\var{\vec{v}}=(\var{v_1},\dots,\var{v_k})$ be a vector of term
    variables contained in $t$.  Then for any vector of Boolean values $\vec{b}
    = (b_1,\dots,b_k)$ and a vector of terms $\vec{s} = (s_1,\dots,s_k)$
    \vspace{-.7em}
    \begin{enumerate}
        \item $g[\vec{f}/\vec{b}]$ denotes a Boolean value ($\true$ or
            $\false$), which is obtained as a result of the simultaneous
            replacement and evaluation of $f_i$ with $b_i$ for each $1 \leq
            i \leq k$;
        \item $t[\vec{f}/\vec{b}]$ denotes the vector obtained as a result of
            the simultaneous replacement of $f_i$ with $b_i$ for each
            $1\leq i \leq k$;
        \item $t[\var{\vec{v}}/\vec{s}]$ denotes the vector obtained as
            a result of the simultaneous replacement of $\var{v}_i$ with
            $s_i$ for each $1\leq i \leq k$;
        \item $t[\vec{f}/\vec{b},\var{\vec{v}}/\vec{s}]$ is a shortcut
            for $t[\vec{f}/\vec{b}][\var{\vec{v}}/\vec{s}]$.
    \end{enumerate}
\end{definition}

Given the set of constraints $\con$, we define the set of b-variables as
\[
    \bvars(\con) = \bigcup_{\constr{t}{t'} \in \con}\bvars(t) \cup \bvars(t')\text{,}
\]
the sets of of down-coerced and up-coerced t-variables as
\[
\dtvars(\con) = \bigcup_{\constr{t}{t'} \in \con}\dtvars(t) \cup \dtvars(t') \text{\quad and\quad} \utvars(\con) = \bigcup_{\constr{t}{t'} \in \con}\utvars(t) \cup \utvars(t')\text{.}
\]

In the following for each set of constraints $S$ such that $|\bvars(S)|=l$,
$|\utvars(S)|=m$ and $|\dtvars(S)|=n$ we use $\vec{f}=(f_1,\dots,f_l)$ to
denote the vector of b-variables contained in $S$,
$\upvar{\vec{v}}=(\upvar{v_1},\dots,\upvar{v_m})$ to denote the vector of
up-coerced t-variables and
$\downvar{\vec{v}}=(\downvar{v_1},\dots,\downvar{v_n})$ to denote the vector of
down-coerced t-variables.

Let $\con$ be a set of constraints such that $|\bvars(\con)|=l$,
$|\dtvars(\con)|=m$, $|\utvars(S)|=n$ and for some $l,m,n \geq 0$. Now we can
define a \cspws{} formally as follows.

\begin{definition}[\cspws{}]
\label{def:cspws}
    Find a vector of Boolean values $\vec{b} = (b_1,\dots,b_l)$ and vectors of
    ground terms $\downvar{\vec{t}}=(\downvar{t_1},\dots,\downvar{t_m})$,
    $\upvar{\vec{t}}=(\upvar{t_1},\dots,\upvar{t_n})$, such that for each
    $\constr{t_1}{t_2} \in \con$ 
    \[
        t_1[\vec{f}/\vec{b},\downvar{\vec{v}}/\downvar{\vec{t}},\upvar{\vec{v}}/\upvar{\vec{t}}] \rel
        t_2[\vec{f}/\vec{b},\downvar{\vec{v}}/\downvar{\vec{t}},\upvar{\vec{v}}/\upvar{\vec{t}}]
    \]
    The tuple $(\vec{b}, \downvar{\vec{t}}, \upvar{\vec{t}})$ is called a solution.
\end{definition}

\section{Solution Approximation}
\label{sec:approximation}

One way to solve \cspws{} is to attempt to solve the problem for all possible
instantiations of b-variables.  
We start with considering the simplification when the original problem is reduced to the one without
b-variables provided that some vector of Boolean assignments is given. 




We use an approximation algorithm that iteratively
traverses the meet and the join semilattices\footnote{The algorithm presented
here is widely used for data-flow analysis and flow graph
optimisations~\cite{kildall1973unified}.} for vectors of ground terms
$\downvar{\terms}_m$ and $\upvar{\terms}_n$, where $m = |\dtvars(\con)|$ and
$n = |\utvars(\con)|$, which represent solution approximations for down-coerced
and up-coerced terms respectively. The algorithm monotonically converges to
a solution if one exists.  Informally, the algorithm performs the following
steps:
\begin{enumerate}
    \item  Compute the initial approximation of the solution for $i=0$  as
        $(\downvar{\vec{a}}_i, \upvar{\vec{a}}_i) = ((\nil, \dots, \nil),
        (\none, \dots, \none))$, where the first element in the pair is the
        vector of top elements from the meet semilattice and the second element
        is the  vector of bottom elements from the join semilattice.
    \item Compute  $(\downvar{\vec{a}}_{i+1}, \upvar{\vec{a}}_{i+1})$ such that
        $\downvar{\vec{a}}_{i+1} \rel \downvar{\vec{a}}_{i}$ and
        $\upvar{\vec{a}}_{i} \rel \upvar{\vec{a}}_{i+1}$.
        \label{item:alg-loop}
    \item Repeat step~\ref{item:alg-loop} until a chain of approximations
        converges to the solution, i.e. $(\downvar{\vec{a}}_{i+1},
        \upvar{\vec{a}}_{i+1}) = (\downvar{\vec{a}}_{i}, \upvar{\vec{a}}_{i})$,
        or a situation where some of the constraints from
        step~\ref{item:alg-loop} cannot be satisfied.  Then return the last
        approximation as the solution or $\unsat$.
\end{enumerate}

We extend the set $\downvar{\terms}_m$ with the element $\tbot$, i.e.
$\downvar{\widetilde{\terms}}_m = \downvar{\terms}_m \cup \{\tbot\}$, and the
set $\upvar{\terms}_n$ with the element $\ttop$, i.e.
$\upvar{\widetilde{\terms}}_n = \upvar{\terms}_n \cup \{\ttop\}$.  Here $\tbot$
is defined as the bottom element of the meet semilattice, i.e. $\tbot \rel
\downvar{\vec{a}}$ for any  $\downvar{\vec{a}} \in
\downvar{\widetilde{\terms}}_m$, and  $\ttop$ is defined as the top element of
the join semilattice, i.e. $\upvar{\vec{a}} \rel \ttop$ for any
$\upvar{\vec{a}} \in \upvar{\widetilde{\terms}}_n$. The algorithm returns
$\tbot$ or $\ttop$ if it is unable to find an approximation for some
constraints, which, as shown in \cref{thm:cspkpn2-correct} below, means that
the input set of constraints does not have a solution.


\subsection{Approximating Function}
\label{sec:approximating-function}

In order to specify how the next approximation is computed we introduce the
\emph{approximating function} $\af: \con \times \downvar{\widetilde{\terms}}_m
\times \upvar{\widetilde{\terms}}_n \to \downvar{\widetilde{\terms}}_m \times
\upvar{\widetilde{\terms}}_n$ that maps a single constraint and the current
approximation to the new approximation.

The function $\af$ is given below for all categories of terms (except for
choices because they are symmetrical to the cases for records and switches that
are reduced to other term categories).  Let
$\downvar{\vec{v}} = (\overline{v}_1,\dots,\overline{v}_m)$, $\upvar{\vec{v}}
= (\overline{\overline{v}}_1,\dots,\overline{\overline{v}}_n)$,
$\downvar{\vec{a}} = (\overline{a}_1,\dots,\overline{a}_m)$,  $\upvar{\vec{a}}
= (\overline{\overline{a}}_1,\dots,\overline{\overline{a}}_n)$.

If $t$ is a symbol,  the given approximation $(\downvar{\vec{a}},
\upvar{\vec{a}})$ already satisfies the constraint:
\[
    \af(t \rel t, \downvar{\vec{a}}, \upvar{\vec{a}}) = (\downvar{\vec{a}}, \upvar{\vec{a}})\text{.}
\]

If $t$ is a down-coerced term  and $\overline{v}_l$ is a down-coerced variable,
the approximation for $\overline{v}_l$ is used to refine the approximation for
variables in $t$.  Therefore, the constraint is reduced to the one with
$\overline{v}_l$ as a ground term, which is obtained by substitution
$\overline{v}_l[\downvar{\vec{v}}/\downvar{\vec{a}}]$:
\[
    \af(t \rel \overline{v}_l, \downvar{\vec{a}}, \upvar{\vec{a}}) = \af(t \rel \overline{v}_l[\downvar{\vec{v}}/\downvar{\vec{a}}], \downvar{\vec{a}}, \upvar{\vec{a}})\text{.}
\]

If $\overline{\overline{{v}}}_l$ is an up-coerced variable and $t$ is an
up-coerced term, the case is symmetric to the previous one:
\[
    \af(\overline{\overline{{v}}}_l \rel t, \downvar{\vec{a}}, \upvar{\vec{a}}) = \af(\overline{\overline{{v}}}_l[\upvar{\vec{v}}/\upvar{\vec{a}}] \rel t, \downvar{\vec{a}}, \upvar{\vec{a}})\text{.}
\]

If $\overline{v}_l$ is a down-coerced variable and $t$ is a down-coerced term,
then $\overline{v}_l$ must be not higher than the ground term
$t[\downvar{\vec{v}}/\downvar{\vec{a}},\upvar{\vec{v}}/\upvar{\vec{a}}]$ in the
meet semilattice:
\[
    \af(\overline{v}_l \rel t, \downvar{\vec{a}}, \upvar{\vec{a}}) = ((\overline{a}_1,\dots,\overline{a}_l \sqcap t[\downvar{\vec{v}}/\downvar{\vec{a}},\upvar{\vec{v}}/\upvar{\vec{a}}],\dots,\overline{a}_m), \upvar{\vec{a}})\text{.}
\]

If $t$ is an up-coerced term and $\upvar{v}_l$ is an up-coerced variable, the
case is symmetric to the previous one:
\[
    \af(t \rel \upvar{v}_l, \downvar{\vec{a}}, \upvar{\vec{a}}) = (\downvar{\vec{a}}, (\overline{a}_1,\dots,\overline{a}_l \sqcup t[\downvar{\vec{v}}/\downvar{\vec{a}},\upvar{\vec{v}}/\upvar{\vec{a}}],\dots,\overline{a}_n))\text{.}
\]

If $t_1$ and $t_2$ are tuples $\tuple{t^1_1 \dots t^1_k}$ and $\tuple{t^2_1
\dots t^2_k}$ respectively, then the constraint must hold for the corresponding
nested terms:
\[
    \af(\tuple{t^1_1 \dots t^1_k} \rel \tuple{t^2_1 \dots t^2_k}, \downvar{\vec{a}}, \upvar{\vec{a}}) = (\bigsqcap_{1 \leq i \leq k} \downvar{\vec{a_i}}, \bigsqcup_{1 \leq i \leq k} \upvar{\vec{a_i}})\text{.}
\]

If $t_1$ and $t_2$ are records $\record{\el{l^1_1}{}{t^1_1}, \dots,
\el{l^1_p}{}{t^1_p}}{}$ and $\record{\el{l^2_1}{}{t^2_1}, \dots,
\el{l^2_q}{}{t^2_q}}{}$ respectively, two cases must be considered:
\begin{itemize}
    \item If for all $i$ ($1 \leq i \leq q$) there exists $j$ such
        that $l^1_j = l^2_i$, then the constraint for nested terms $t^1_j \rel
        t^2_i$ must hold:
        \[
            \af(\record{\el{l^1_1}{}{t^1_1}, \dots, \el{l^1_p}{}{t^1_p}}{} \rel \record{\el{l^2_1}{}{t^2_1}, \dots, \el{l^2_q}{}{t^2_q}}{}, \downvar{\vec{a}}, \upvar{\vec{a}}) = (\bigsqcap_{1 \leq i \leq q} \downvar{\vec{a_i}}, \bigsqcup_{1 \leq i \leq q} \upvar{\vec{a_i}})\text{.}
        \]
    \item Otherwise, the set of labels in $t_2$ is not a subset of the labels
        in $t_1$ and, therefore, $t_1 \rel t_2$ is unsatisfiable:
        \[
            \af(\record{\el{l^1_1}{}{t^1_1}, \dots, \el{l^1_p}{}{t^1_p}}{} \rel \record{\el{l^2_1}{}{t^2_1}, \dots, \el{l^2_q}{}{t^2_q}}{}, \downvar{\vec{a}}, \upvar{\vec{a}}) = (\tbot, \ttop)\text{.}
        \]
\end{itemize}

If $\overline{v}_l$ is a down-coerced variable, $t_1$ and $t_2$ are records
$\record{\el{l^1_1}{}{t^1_1}, \dots, \el{l^1_p}{}{t^1_p}}{\overline{v}_l}$ and
$\record{\el{l^2_1}{}{t^2_1}, \dots, \el{l^2_q}{}{t^2_q}}{}$ respectively, the
constraint can be satisfied only if for every nested term $t^2_i$ with the
label $l^2_j$ in $t$ one of the following holds: 1) there exists a subterm
$t^1_j$ with equal label in $t_1$ and $t^1_j \rel t^2_i$ holds, or 2)
$\overline{v}_l$ is a record that contains a junior to $t^2_i$ element with the
same label:
\[
    \af(\record{\el{l^1_1}{}{t^1_1}, \dots, \el{l^1_p}{}{t^1_p}}{\overline{v}_l} \rel \record{\el{l^2_1}{}{t^2_1}, \dots, \el{l^2_q}{}{t^2_q}}{}, \downvar{\vec{a}}, \upvar{\vec{a}}) = (\bigsqcap_{1 \leq i \leq q} \downvar{\vec{a_i}}, \bigsqcup_{1 \leq i \leq q} \upvar{\vec{a_i}})\text{,}
\]
where
\[
    (\downvar{\vec{a_i}}, \upvar{\vec{a_i}}) =
    \begin{cases}
        \af(t^1_j \rel t^2_i, \downvar{\vec{a_i}}, \upvar{\vec{a_i}})&\text{if } \exists j: l^1_j = l^2_i\\
        ((\overline{a}_1, \dots, \overline{a}_l \sqcap t^2_i[\downvar{\vec{v}}/\downvar{\vec{a}},\upvar{\vec{v}}/\upvar{\vec{a}}], \dots \overline{a}_m), \upvar{\vec{a}})&\text{otherwise.}
    \end{cases}
\]

If $t_1$ is a record $\record{\el{l^1_1}{}{t^1_1}, \dots,
\el{l^1_p}{}{t^1_p}}{}$ or $\record{\el{l^1_1}{}{t^1_1}, \dots,
\el{l^1_p}{}{t^1_p}}{\overline{v}_l}$ and $t_2$ is a record
$\record{\el{l^2_1}{}{t^2_1}, \dots, \el{l^2_q}{}{t^2_q}}{\overline{u}_r}$,
then the constraint can by substitution be reduced to the previous cases for
records:
 \[
    \af(t_1 \rel t_2, \downvar{\vec{a}}, \upvar{\vec{a}}) = \af(t_1 \rel t_2[\overline{u}_r/\overline{a}_r], \downvar{\vec{a}}, \upvar{\vec{a}})\text{.}
\]



The function $\af$ has the homomorphism property, which is important for
showing termination and correctness of the algorithm. 

\begin{lemma}[Homomorphism]
    \label{lem:homomoprhism}
    Let $\af(t_1 \rel t_2, \downvar{\vec{a}}_1, \upvar{\vec{a}}_1) = ({\downvar{\overline{\vec{a}}}}_1, {\upvar{\overline{\vec{a}}}}_1)$
    and $\af(t_1 \rel t_2, \downvar{\vec{a}}_2, \upvar{\vec{a}}_2) = ({\downvar{\overline{\vec{a}}}}_2, {\upvar{\overline{\vec{a}}}}_2)$.
    Then
    \[
        \af(t_1 \rel t_2, \downvar{\vec{a}}_1 \sqcap \downvar{\vec{a}}_2, \upvar{\vec{a}}_1 \sqcup \upvar{\vec{a}}_2) = ({\downvar{\overline{\vec{a}}}}_1 \sqcap {\downvar{\overline{\vec{a}}}}_2, {\upvar{\overline{\vec{a}}}}_1 \sqcup {\upvar{\overline{\vec{a}}}}_2).
    \]
\end{lemma}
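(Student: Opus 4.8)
The plan is to prove the identity by induction on the recursion of $\af$ on the constraint $t_1 \rel t_2$, which terminates because every recursive call either descends into structurally smaller subterms or replaces a variable/tail by its (finite, ground) current approximation. The only algebraic facts I need are that $\sqcap$ and $\sqcup$ are associative, commutative and idempotent, which follows from \cref{cor:semilattice}, and that on tuples and records the meet and join are computed componentwise along the covariant subterm positions. The strategy is to show that in each defining clause $\af$ produces its output as a combination, under $\sqcap$ and $\sqcup$, of quantities that are either unchanged, obtained by the inductive hypothesis on a subconstraint, or obtained by substituting the current approximation into a term; the homomorphism property then follows by regrouping these combinations.

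The base cases are immediate. If $t_1$ is a symbol then $\af$ is the identity on the approximation, so all three computations return their input and the claim is trivial. If $t_1, t_2$ are records whose label sets force unsatisfiability, then $\af$ returns the constant $(\tbot, \ttop)$; crucially this branch is selected by the inclusion test on $t_1, t_2$ alone, hence fires identically in all three computations, and $(\tbot \sqcap \tbot, \ttop \sqcup \ttop) = (\tbot, \ttop)$. The tuple clause and the label-compatible record clause apply the inductive hypothesis to each nested constraint and then invoke the componentwise description of $\sqcap$ and $\sqcup$, so these reduce to bookkeeping.

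The substantive clauses are the variable ones. For the update clause that replaces the $l$-th component of the down-vector by $\overline{a}_l \sqcap t[\downvar{\vec v}/\downvar{\vec a},\upvar{\vec v}/\upvar{\vec a}]$, I expand the three computations and regroup the meets using associativity, commutativity and idempotency; the desired identity then reduces to a single \emph{substitution homomorphism} claim, namely
\[
    t[\downvar{\vec v}/\downvar{\vec a}_1,\upvar{\vec v}/\upvar{\vec a}_1] \sqcap t[\downvar{\vec v}/\downvar{\vec a}_2,\upvar{\vec v}/\upvar{\vec a}_2] = t[\downvar{\vec v}/(\downvar{\vec a}_1 \sqcap \downvar{\vec a}_2),\upvar{\vec v}/(\upvar{\vec a}_1 \sqcup \upvar{\vec a}_2)]
\]
for down-coerced $t$, together with its $\sqcup$-dual for up-coerced $t$. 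I would establish this as an auxiliary lemma by induction on $t$: a bare variable looks up its vector component, and the claim is then the very definition of the componentwise vector meet/join; a tuple or record distributes the substitution over its covariant subterms, where the inductive hypothesis applies. The remaining clauses — the variable in senior position $t \rel \overline{v}_l$ and the record-tail clauses, which first rewrite the constraint using $\overline{v}_l[\downvar{\vec v}/\downvar{\vec a}]$ before recursing — are folded into the same argument, since the substituted tail or variable is itself governed by the substitution lemma and the final outputs are again built from $\sqcap$/$\sqcup$ of the same components.

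I expect the substitution lemma, and specifically the interplay between the down-coerced meet and the up-coerced join, to be the main obstacle. When $t$ is down-coerced with only down-coerced variables at its leaves, every nested position is covariant and down-typed, so the meet descends as a meet uniformly and the lemma follows cleanly. The delicate configuration is an up-coerced variable occurring inside a choice-valued field of a down-coerced $t$: the meet of the two instantiated records descends into that field as a choice \emph{meet}, whereas the combined approximation feeds the field its \emph{joined} up-coerced value, and these do not agree term-by-term. Discharging the lemma therefore hinges on a structural invariant of the constraint sets handled by the algorithm — that up-coerced variables are confined to up-coerced positions and are not meet-combined within an enclosing down-coerced term — and on checking that this invariant is preserved by every clause of $\af$. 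Isolating and justifying that invariant, rather than the surrounding lattice algebra, is where the real work lies.
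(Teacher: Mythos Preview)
The paper states \cref{lem:homomoprhism} without proof, so there is nothing to compare your argument against directly. Your overall plan --- induction along the recursive clauses of $\af$, reducing the variable-update clauses to a substitution-homomorphism lemma, and handling the compound constructors by componentwise regrouping of $\sqcap$/$\sqcup$ --- is the natural one and is consistent with how the paper uses the lemma downstream (it only needs that $\af_\con$ is order-preserving, which your homomorphism identity implies).

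Your own diagnosis of the obstacle is accurate and, if anything, more honest than the paper. The substitution identity you isolate,
\[
    t[\downvar{\vec v}/\downvar{\vec a}_1,\upvar{\vec v}/\upvar{\vec a}_1]\ \sqcap\ t[\downvar{\vec v}/\downvar{\vec a}_2,\upvar{\vec v}/\upvar{\vec a}_2]\ =\ t[\downvar{\vec v}/(\downvar{\vec a}_1 \sqcap \downvar{\vec a}_2),\,\upvar{\vec v}/(\upvar{\vec a}_1 \sqcup \upvar{\vec a}_2)],
\]
does fail in general when a down-coerced $t$ carries a choice-valued subterm containing an up-coerced variable: the outer $\sqcap$ on records pushes a $\sqcap$ into that field, whereas the right-hand side feeds it $\upvar{\vec a}_1 \sqcup \upvar{\vec a}_2$. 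The paper's \cref{cor:semilattice} only asserts a \emph{pair} of semilattices, one for each coercion sort, and gives no argument that the down-coerced meet is well-behaved on choice-valued fields. So the structural invariant you propose --- that along any run of the algorithm, up-coerced variables occur only in up-coerced positions relative to the operations actually performed --- is exactly what is needed, and the paper neither states nor proves it. Your plan is sound; just be aware that you are filling a genuine gap rather than reconstructing an omitted routine verification.
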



%
%
The  function $\af_\con$ is a composition of $\af$ functions that are
sequentially applied to all constraints in $\con$ (the order in which $\af$ is
applied to the constraints is not important due to distributivity of the
semi-lattices):
\[
    \af_\con(\downvar{\vec{a}}, \upvar{\vec{a}}) = \af(t^{|\con|}_1 \rel t^{|\con|}_2, \af(t^{|\con|-1}_1 \rel t^{|\con|-1}_2, \dots, \af(t^1_1 \rel t^1_2, \downvar{\vec{a}}, \upvar{\vec{a}})\dots))\text{.}
\]
The sequential composition preserves homomorphism for $\af_\con$.  In
\cref{sec:algorithm} we tacitly assume that for arbitrary terms the function
$\af_\con$ is defined in a similar way.

\subsection{Fixed-Point Algorithm}

Now we present the algorithm (see \cref{alg:cspkpn2}) that computes a chain of
approximations for the case $\bvars(\con) = \emptyset$ that converges to the
solution if one exists.


\begin{algorithm}[htb]
    \caption{$\cspws(\con)$, where $\bvars(\con) = \emptyset$}
\label{alg:cspkpn2}
\begin{algorithmic}[1]
    \State$i \gets 0$
    \State$(\downvar{\vec{a}}_0, \upvar{\vec{a}}_0) \gets ((\nil,\dots,\nil), (\none,\dots,\none))$
    \Repeat
        \State$i \gets i + 1$
        \State$(\downvar{\vec{a}}_i, \upvar{\vec{a}}_i) \gets \af_\con(\downvar{\vec{a}}_{i-1}, \upvar{\vec{a}}_{i-1})$
    \Until{$(\downvar{\vec{a}}_i, \upvar{\vec{a}}_i) = (\downvar{\vec{a}}_{i-1}, \upvar{\vec{a}}_{i-1})$}
    \If{$(\downvar{\vec{a}}_i, \upvar{\vec{a}}_i) = (\tbot, \ttop)$}
    	\State\Return\unsat
    \Else
        \State\Return $(\downvar{\vec{a}}_i, \upvar{\vec{a}}_i)$
    \EndIf
   \end{algorithmic}
\end{algorithm}


\begin{theorem}[Termination]
    \label{thm:termination}
    For any set of constraints $\con$ such that $\bvars(\con) = \emptyset$,
    \cref{alg:cspkpn2} terminates after a finite number of steps.
\end{theorem}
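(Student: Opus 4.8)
The plan is to prove termination in the standard way for a Kleene/Kildall-style fixed-point iteration: first show that the sequence of approximations produced by the loop of \cref{alg:cspkpn2} is monotone in the product of the two semilattices, and then argue that such a monotone sequence must stabilise after finitely many steps because it ranges over a finite-height portion of the lattice. Once stabilisation at some index $N$ is established, the \textbf{until}-guard $(\downvar{\vec{a}}_i,\upvar{\vec{a}}_i)=(\downvar{\vec{a}}_{i-1},\upvar{\vec{a}}_{i-1})$ fires; since each individual evaluation of $\af_\con$ is a finite structural recursion over the finitely many constraints and finite terms, the algorithm halts after finitely many elementary steps.

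First I would establish monotonicity of the iteration. Writing $(\downvar{\vec{a}}_i,\upvar{\vec{a}}_i)=\af_\con(\downvar{\vec{a}}_{i-1},\upvar{\vec{a}}_{i-1})$, the claim is $\downvar{\vec{a}}_{i}\rel\downvar{\vec{a}}_{i-1}$ and $\upvar{\vec{a}}_{i-1}\rel\upvar{\vec{a}}_{i}$, which is exactly the invariant promised in step~\ref{item:alg-loop} of the informal description. This is proved by structural induction on the clauses defining $\af$. The only clauses that alter a down-coerced component replace $\overline{a}_l$ by a meet $\overline{a}_l\sqcap(\cdot)$, and by \cref{cor:semilattice} we have $\overline{a}_l\sqcap(\cdot)\rel\overline{a}_l$; dually, the only clauses altering an up-coerced component replace $\overline{\overline{a}}_l$ by a join, and $\overline{\overline{a}}_l\rel\overline{\overline{a}}_l\sqcup(\cdot)$. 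The remaining clauses either leave the approximation unchanged (symbols, and the variable-on-the-right reductions, which merely substitute and recurse) or return the extremal pair $(\tbot,\ttop)$, which is below every down-coerced value and above every up-coerced value by definition; the tuple and record clauses combine the recursive results by $\sqcap$ and $\sqcup$, so they too are bounded by the input. Hence $\af_\con$, being the sequential composition of these reductive/extensive steps (order-immaterial by \cref{lem:homomoprhism} and distributivity), is reductive on the down-coerced block and extensive on the up-coerced block; monotonicity of the whole chain then follows by induction on $i$ from the top/bottom seed $((\nil,\dots,\nil),(\none,\dots,\none))$.

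The hard part is the second ingredient. A $\rel$-monotone chain in $\downvar{\widetilde{\terms}}_m\times\upvar{\widetilde{\terms}}_n$ need not stabilise in general, because $(\terms,\rel)$ has infinite height: records may carry arbitrarily many fields and be nested to arbitrary depth. The crux is therefore to show that every component of every approximation ranges over a \emph{finite} set determined by $\con$. I would prove that the values taken by the iterates are confined to the closure under $\sqcap$ and $\sqcup$ of the ground instances of the subterms occurring in $\con$, and that this closure is finite: the labels, symbols, tuple arities and nesting depths that can appear in any approximation are bounded by those occurring in the finitely many constraints, so only finitely many distinct terms arise. The delicate point is bounding the structural depth, since substituting a variable by its current approximation could in principle deepen a term; ruling this out is the real content of the argument and is precisely what renders the visited sublattice finite-height. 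Granting this, the descending/ascending chain condition holds on the portion of the lattice actually traversed, so the monotone chain of the previous paragraph is eventually constant.

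Finally I would assemble the two parts. By monotonicity the sequence $(\downvar{\vec{a}}_i,\upvar{\vec{a}}_i)$ is $\rel$-ordered componentwise, and by finiteness it cannot strictly decrease (resp. increase) forever, so there is a least $N$ with $(\downvar{\vec{a}}_{N},\upvar{\vec{a}}_{N})=(\downvar{\vec{a}}_{N-1},\upvar{\vec{a}}_{N-1})$, at which point the \textbf{repeat}-loop exits and the algorithm returns either $\unsat$ (if the pair equals $(\tbot,\ttop)$) or the fixed point. Because each of the $N$ iterations performs a terminating traversal of the finite constraint set by structural recursion on finite terms, the total number of elementary steps is finite, which establishes the theorem. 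I expect the finite-height claim of the third paragraph to be the principal obstacle and the place where the bulk of the technical work — an explicit bound on term complexity in terms of $\con$ — must be carried out.
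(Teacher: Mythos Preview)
Your proposal is correct and follows essentially the same approach as the paper's own proof: monotonicity of $\af_\con$ combined with a finite-height argument for the portion of the lattice actually visited. The paper's proof is considerably terser---it asserts monotonicity without the clause-by-clause justification you give and argues finite height solely from the finiteness of the label set in $\con$---so your explicit flagging of the nesting-depth bound as the principal technical obstacle is in fact more careful than what the paper itself provides.
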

\begin{proof}
    $\af_\con$ is a monotonic function that maps $\downvar{\vec{a}}_{i-1} \in
    \downvar{\widetilde{\terms}}_m$ to $\downvar{\vec{a}}_{i} \in \downvar{\widetilde{\terms}}_m$ and
    $\upvar{\vec{a}}_{i-1} \in \upvar{\widetilde{\terms}}_n$ to
    $\upvar{\vec{a}}_{i} \in \upvar{\widetilde{\terms}}_n$, where $\downvar{\vec{a}}_{i-1}$ and
    $\downvar{\vec{a}}_{i}$ are elements of the lattice
    $(\downvar{\widetilde{\terms}}_m, \rel)$ such that $\downvar{\vec{a}}_{i}
    \rel \downvar{\vec{a}}_{i-1}$, and $\upvar{\vec{a}}_{i}$ and
    $\upvar{\vec{a}}_{i}$ are elements of the lattice
    $(\upvar{\widetilde{\terms}}_n, \rel)$ such that $\upvar{\vec{a}}_{i-1}
    \rel \upvar{\vec{a}}_{i}$.  Therefore, \cref{alg:cspkpn2}, which
    iteratively calls $\af_\con$, terminates after a finite number of steps if
    the lattices have a finite height.

    The semilattice for symbols has a fixed height of two element (the $\nil$
    and the symbol itself).  The rest of the terms, which represent
    collections, may ``expand'' only a finite number of times for a given
    $\con$ (by expansion we mean adding new elements to a collection, which
    leads to term coercion in the semilattice).  The size of a tuple is fixed.
    A record and a choice are expanded by adding elements with labels that are
    not yet presented in the collection.  The set of labels in $\con$ is finite
    and the algorithm cannot generate new labels.  Therefore, the record and
    the choice can expand only a finite number of times.  Therefore, the
    lattices have a finite height.

\end{proof}


Substitution of variables with ground terms is a monotonic function.  Below we
prove that the substitution of down-coerced variables is a decreasing function.
Similarly, we can prove that the substitution of up-coerced variables is an
increasing function in the same way.
\begin{proposition}[Substitution monotonicity]
    \label{prop:monotonicity}
    Let $t$ be a term such that $|\bvars(t)| = \emptyset$, $\downvar{\vec{v}} = (v_1,\dots,v_k)$ be a vector of down-coerced variables in $t$, and $\downvar{\vec{s_1}} = (s^1_1,\dots,s^1_k)$ and
    $\downvar{\vec{s_2}} = (s^2_1,\dots, s^2_k)$ be vectors of down-coerced
    ground terms such that $\downvar{\vec{s_1}} \rel \downvar{\vec{s_2}}$.  Then
    \[
        t[\downvar{\vec{v}}/\downvar{\vec{s_1}}] \rel
        t[\downvar{\vec{v}}/\downvar{\vec{s_2}}]\text{.}
    \]
\end{proposition}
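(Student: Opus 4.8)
The plan is to argue by structural induction on $t$, taken in canonical form, so that (since $\bvars(t)=\emptyset$) every guard is trivially \true{} and omitted and no switches remain. Throughout I abbreviate $t_1 := t[\downvar{\vec{v}}/\downvar{\vec{s_1}}]$ and $t_2 := t[\downvar{\vec{v}}/\downvar{\vec{s_2}}]$, and I assume, as is needed for $t_1$ and $t_2$ to be ground and hence comparable under $\rel$, that $\downvar{\vec{v}}$ exhausts the t-variables of $t$ (equivalently, $t$ carries no up-coerced variable). The goal is $t_1 \rel t_2$, and the hypothesis $\downvar{\vec{s_1}} \rel \downvar{\vec{s_2}}$ unfolds to $s^1_i \rel s^2_i$ for every $i$.

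The two base cases are immediate. If $t$ is a symbol it contains no variable, so $t_1 = t_2 = t$ and $t_1 \rel t_2$ by reflexivity (clause~3 of \cref{def:seniority}). If $t = v_i$, then $t_1 = s^1_i$ and $t_2 = s^2_i$, and $s^1_i \rel s^2_i$ is the hypothesis.

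For the compound constructors I would exploit that substitution commutes with each constructor, so that $t_1$ and $t_2$ share the top-level shape and label set of $t$, with the immediate subterms substituted. When $t = \tuple{u_1 \dots u_n}$, the induction hypothesis gives $u_i[\downvar{\vec{v}}/\downvar{\vec{s_1}}] \rel u_i[\downvar{\vec{v}}/\downvar{\vec{s_2}}]$ for each $i$, and clause~4(a) assembles these into $t_1 \rel t_2$. For a tailless record or a choice the reasoning is identical, invoking clause~4(b) respectively~4(c): since $t_1$ and $t_2$ inherit the same labels from $t$, the width (set-inclusion) conditions hold with equality and the matched subterms are related by the induction hypothesis. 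What makes the hypothesis apply in the correct direction in all three cases is that tuples, records and choices are covariant in their element terms.

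The only delicate case is a record with a down-coerced tail variable, $t = \record{\el{l_1}{}{u_1},\dots,\el{l_n}{}{u_n}}{v_r}$: here $t_1$ is obtained by merging the substituted explicit fields with all fields of the ground record $s^1_r$, and $t_2$ likewise with $s^2_r$, and the crux of the whole proof is to see that this merge respects $\rel$. I would read off from $s^1_r \rel s^2_r$ (clause~4(b)) that the labels of $s^2_r$ are contained in those of $s^1_r$, whence $\mathrm{labels}(t_2) \subseteq \mathrm{labels}(t_1)$ and the width condition of clause~4(b) holds for $t_1 \rel t_2$; and for each label of $t_2$ the associated value is related to its counterpart in $t_1$ either by the induction hypothesis (for the explicit fields) or by $s^1_r \rel s^2_r$ (for the inherited fields). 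Well-formedness of $t$ guarantees that the explicit labels are disjoint from the tail labels, so the merge introduces no label clash and $t_1, t_2$ are genuine well-formed records to which clause~4(b) applies. The companion statement for up-coerced variables is dual, replacing the meet semilattice by the join semilattice and clause~4(b) by clause~4(c), and is proved verbatim the same way.
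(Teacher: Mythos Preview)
Your proposal is correct and follows the same underlying idea as the paper: the seniority relation is covariant in all subterm positions, so substituting junior terms for the same variables yields a junior result. The paper merely states this covariance observation in two sentences without carrying out the induction, whereas you spell out the full case analysis (including the record-with-tail merge, which the paper does not discuss); your treatment is thus a detailed instantiation of the paper's sketch rather than a different approach.
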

\begin{proof}
    The monotonicity of the substitution follows from the structure of the
    seniority relation.  Any term is covariant with respect to its subterms
    (see~\cref{def:seniority}).

\end{proof}


The function $\af$ produces the ``tightest'' approximation.  Furthermore, the
tightest approximation is unique.
\begin{lemma}
    \label{lem:tightest}
    Assume a constraint $t_1 \rel t_2$ and approximations
    $(\downvar{\vec{a_1}}, \upvar{\vec{a_1}})$ and $(\downvar{\vec{a_2}},
    \upvar{\vec{a_2}})$ such that $\af(t_1 \rel t_2, \downvar{\vec{a_1}},
    \upvar{\vec{a_1}}) = (\downvar{\vec{a_2}}, \upvar{\vec{a_2}})$ are given.
    If
    \[
        t_1[\downvar{\vec{v}}/\downvar{\vec{a}}_2,\upvar{\vec{v}}/\upvar{\vec{a}}_1] \rel
        t_2[\downvar{\vec{v}}/\downvar{\vec{a}}_1,\upvar{\vec{v}}/\upvar{\vec{a}}_2]\text{,}
    \]
    then:
    \begin{enumerate}
        \item No approximation $(\downvar{\vec{a_3}}, \upvar{\vec{a_3}})$
            exists such that $(\downvar{\vec{a_3}}, \upvar{\vec{a_3}}) \neq
            (\downvar{\vec{a_2}}, \upvar{\vec{a_2}})$, $\downvar{\vec{a_2}}
            \rel \downvar{\vec{a_3}}$, $\upvar{\vec{a_3}} \rel
            \upvar{\vec{a_2}}$ and
            \begin{equation}
                \label{eq:tightest2}
                t_1[\downvar{\vec{v}}/\downvar{\vec{a}}_2,\upvar{\vec{v}}/\upvar{\vec{a}}_1] \rel
                t_1[\downvar{\vec{v}}/\downvar{\vec{a}}_3,\upvar{\vec{v}}/\upvar{\vec{a}}_1] \rel
                t_2[\downvar{\vec{v}}/\downvar{\vec{a}}_1,\upvar{\vec{v}}/\upvar{\vec{a}}_3]\text{.}
            \end{equation}
        \item For any other approximation $(\downvar{\vec{a'_2}},
            \upvar{\vec{a'_2}})$ such that
            \begin{equation}
                \label{eq:tightest3}
                t_1[\downvar{\vec{v}}/\downvar{\vec{a'}}_2,\upvar{\vec{v}}/\upvar{\vec{a}}_1] \rel
                t_2[\downvar{\vec{v}}/\downvar{\vec{a}}_1,\upvar{\vec{v}}/\upvar{\vec{a'}}_2]
            \end{equation}
            there exists $(\downvar{\vec{a'_3}}, \upvar{\vec{a'_3}})$ such that
            \[
                t_1[\downvar{\vec{v}}/\downvar{\vec{a'}}_2,\upvar{\vec{v}}/\upvar{\vec{a}}_1] \rel
                t_1[\downvar{\vec{v}}/\downvar{\vec{a'}}_3,\upvar{\vec{v}}/\upvar{\vec{a}}_1] \rel
                t_2[\downvar{\vec{v}}/\downvar{\vec{a}}_1,\upvar{\vec{v}}/\upvar{\vec{a'}}_3]\text{.}
            \]
    \end{enumerate}
\end{lemma}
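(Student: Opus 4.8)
The plan is to prove both parts by structural induction on $t_1 \rel t_2$ following the case split in the definition of $\af$, reducing every case to the fact that in the semilattices of \cref{cor:semilattice} the meet $\sqcap$ is the unique greatest lower bound of down-coerced terms and the join $\sqcup$ is the unique least upper bound of up-coerced terms. The two engines of the induction are \cref{lem:homomoprhism}, which lets $\af$ commute with the componentwise $\sqcap$ and $\sqcup$, and \cref{prop:monotonicity}, which supplies the covariance that validates the left-hand inequalities of \eqref{eq:tightest2} and \eqref{eq:tightest3}.

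First I would settle the base cases. If $t_1 \rel t_2$ is $t \rel t$ with $t$ a symbol, then $\af$ returns its argument unchanged, so $(\downvar{\vec{a_2}}, \upvar{\vec{a_2}}) = (\downvar{\vec{a_1}}, \upvar{\vec{a_1}})$ and both parts are immediate. The decisive base case is a down-coerced variable on the left, $\overline{v}_l \rel t$, where $\af$ alters only the $l$-th down-component, replacing it by $\overline{a}_l \sqcap t[\downvar{\vec{v}}/\downvar{\vec{a}},\upvar{\vec{v}}/\upvar{\vec{a}}]$. For Part 1 I would argue that a competing $\downvar{\vec{a_3}}$ with $\downvar{\vec{a_2}} \rel \downvar{\vec{a_3}}$ obeying \eqref{eq:tightest2} has its $l$-th entry senior to this meet yet still junior to the right-hand term of the constraint; since the meet is the greatest value junior to that term, no strictly senior entry survives, which pins the entry and forces $\downvar{\vec{a_3}} = \downvar{\vec{a_2}}$ (covariance and $\upvar{\vec{a_3}} \rel \upvar{\vec{a_2}}$ are used to align the cross-substituted up-components). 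The symmetric case $t \rel \upvar{v}_l$ is dual, using that $\sqcup$ is the least upper bound.

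For the inductive cases --- tuples and records, with choices symmetric to records --- I would apply \cref{lem:homomoprhism} to rewrite $\af$ of a compound constraint as the componentwise meet (in down) and join (in up) of the $\af$-values of its immediate subconstraints, so that the induction hypothesis applies termwise. Part 1 then propagates upward because a meet of unique tightest subcomponents is itself a unique greatest lower bound: were some entry of $\downvar{\vec{a_3}}$ strictly senior to that of $\downvar{\vec{a_2}}$, covariance (\cref{prop:monotonicity}) would loosen one subconstraint past its own tightest approximation, contradicting the inductive claim, and dually for the up-components. For Part 2 I would take the witness $(\downvar{\vec{a'_3}}, \upvar{\vec{a'_3}})$ to be the $\af$-output $(\downvar{\vec{a_2}}, \upvar{\vec{a_2}})$ and show, by the same greatest-lower-bound/least-upper-bound characterisation, that it dominates every satisfying candidate in the loosening order, i.e.\ $\downvar{\vec{a'}}_2 \rel \downvar{\vec{a_2}}$ and $\upvar{\vec{a_2}} \rel \upvar{\vec{a'}}_2$; the left inequality of the chain is then \cref{prop:monotonicity} and the right inequality is exactly the hypothesis of the lemma.

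The step I expect to be the main obstacle is the record carrying a tail variable, $\record{\el{l^1_1}{}{t^1_1}, \dots, \el{l^1_p}{}{t^1_p}}{\overline{v}_l} \rel \record{\el{l^2_1}{}{t^2_1}, \dots, \el{l^2_q}{}{t^2_q}}{}$, where $\af$ routes each field $t^2_i$ of the right record either to a matching left field $t^1_j$ of the same label or, when no label matches, into the tail via $\overline{a}_l \sqcap t^2_i[\downvar{\vec{v}}/\downvar{\vec{a}},\upvar{\vec{v}}/\upvar{\vec{a}}]$. Establishing tightness and uniqueness here requires showing that this partition of the right-hand labels into ``matched'' and ``absorbed by the tail'' is forced, so that no redistribution of fields between the explicit part and the tail yields a senior yet still satisfying approximation; I would argue this label-by-label, appealing once more to the uniqueness of the meet for the tail entry. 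I would first dispatch the symmetric subcase of a tail on the right-hand record using the substitution identity $\af(t_1 \rel t_2, \downvar{\vec{a}}, \upvar{\vec{a}}) = \af(t_1 \rel t_2[\overline{u}_r/\overline{a}_r], \downvar{\vec{a}}, \upvar{\vec{a}})$ recorded above, leaving only the left-tail case for direct treatment.
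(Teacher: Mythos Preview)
Your proposal is sound, but it takes a substantially more detailed route than the paper. The paper's own proof is a two-sentence sketch: for Part~1 it simply observes that, by the very definition of $\af$ in \cref{sec:approximating-function}, the function coerces an approximation component only when the constraint forces it, so a strictly senior $(\downvar{\vec{a_3}},\upvar{\vec{a_3}})$ still satisfying \eqref{eq:tightest2} would mean $\af$ had performed an unnecessary coercion; for Part~2 it just cites \cref{def:seniority} and says uniqueness follows. No induction, no explicit appeal to \cref{lem:homomoprhism} or \cref{prop:monotonicity}, and no case analysis of the tail-variable clause.

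What your approach buys is an actual argument rather than an appeal to intuition. The paper's ``$\af$ never over-coerces'' is precisely the claim to be proved, and your structural induction on the defining clauses of $\af$---grounding the variable cases in the universal property of $\sqcap$ and $\sqcup$, and propagating through tuples and records via \cref{lem:homomoprhism}---is the honest way to substantiate it. Your identification of the left-tail record clause as the delicate step is also accurate: that is where the definition of $\af$ makes a nontrivial routing choice (matched label versus tail absorption), and showing the choice is forced is exactly what tightness requires. The paper's argument, by contrast, is shorter to write and easier to read at the cost of leaving that verification to the reader; it relies on the reader accepting, by inspection of each clause in \cref{sec:approximating-function}, that every update is the minimal one compatible with the constraint. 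Both arrive at the same place; yours is the rigorous expansion of what the paper asserts by fiat.
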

\begin{proof}
    \begin{enumerate}
        \item By the definition in \cref{sec:approximating-function} the
            function $\af$ coerces the approximation $(\downvar{\vec{a_1}},
            \upvar{\vec{a_1}})$ only if the coercion is required satisfaction
            of $t_1 \rel t_2$ The function produces $(\downvar{\vec{a_2}},
            \upvar{\vec{a_2}})$ as a result.  The approximation
            $(\downvar{\vec{a_3}}, \upvar{\vec{a_3}})$ such that
            (\ref{eq:tightest2}) holds could only exist if $\af$ performed
            excessive coercions, which always can be avoided.
        \item The uniqueness of $(\downvar{\vec{a_2}}, \upvar{\vec{a_2}})$
            follows from the definition of the seniority relation
            (\cref{def:seniority}).
    \end{enumerate}
\end{proof}

\begin{lemma}
    \label{lem:max-fixed-point}
    Assume a set of constraints $\con$, $\bvars(\con) = \emptyset$, is given.
    Let for $k > 0$
    \[
        (\downvar{\vec{a}}_0, \upvar{\vec{a}}_0), \dots,(\downvar{\vec{a}}_k, \upvar{\vec{a}}_k)
    \]
    be a chain of approximations such that $(\downvar{\vec{a}}_i,
    \upvar{\vec{a}}_i) = \af_\con(\downvar{\vec{a}}_{i-1},
    \upvar{\vec{a}}_{i-1})$ for any $0 < i \leq k$, and $\downvar{\vec{a}}_0
    = (\nil, \dots, \nil)$ and $\upvar{\vec{a}}_0 = (\none, \dots, \none)$.
    Then for any fixed-point $(\downvar{\vec{a}}, \upvar{\vec{a}})$
    \begin{equation}
        \label{eq:best-solution}
        \downvar{\vec{a}} \rel \downvar{\vec{a}_k} \text{ and } \upvar{\vec{a}_k} \rel \upvar{\vec{a}}\text{.}
    \end{equation}
\end{lemma}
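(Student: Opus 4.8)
The plan is to prove \eqref{eq:best-solution} by induction on the chain length $k$, showing that every fixed-point lies below the entire chain in a single combined order. First I would introduce the product order $\preceq$ on approximation pairs by declaring $(\downvar{\vec{x}}_1, \upvar{\vec{y}}_1) \preceq (\downvar{\vec{x}}_2, \upvar{\vec{y}}_2)$ exactly when $\downvar{\vec{x}}_1 \rel \downvar{\vec{x}}_2$ and $\upvar{\vec{y}}_2 \rel \upvar{\vec{y}}_1$, so that the claim \eqref{eq:best-solution} is literally $(\downvar{\vec{a}}, \upvar{\vec{a}}) \preceq (\downvar{\vec{a}}_k, \upvar{\vec{a}}_k)$. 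Under $\preceq$ the starting pair $(\nil, \none)$ is the greatest element, because $\nil$ is the top of the meet semilattice $\downvar{\widetilde{\terms}}_m$ and $\none$ is the bottom of the join semilattice $\upvar{\widetilde{\terms}}_n$; this settles the base case $i = 0$, since any fixed-point $(\downvar{\vec{a}}, \upvar{\vec{a}})$ satisfies $\downvar{\vec{a}} \rel \nil$ and $\none \rel \upvar{\vec{a}}$ trivially.

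The crucial step is to extract $\preceq$-monotonicity of $\af_\con$ from the Homomorphism Lemma (\cref{lem:homomoprhism}), which the text already notes is preserved under the sequential composition defining $\af_\con$. Suppose $(\downvar{\vec{x}}_1, \upvar{\vec{y}}_1) \preceq (\downvar{\vec{x}}_2, \upvar{\vec{y}}_2)$ and write $(\downvar{\vec{p}}_i, \upvar{\vec{q}}_i) = \af_\con(\downvar{\vec{x}}_i, \upvar{\vec{y}}_i)$. Comparability means precisely the absorption identities $\downvar{\vec{x}}_1 \sqcap \downvar{\vec{x}}_2 = \downvar{\vec{x}}_1$ and $\upvar{\vec{y}}_1 \sqcup \upvar{\vec{y}}_2 = \upvar{\vec{y}}_1$. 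Substituting these into the homomorphism identity collapses its left-hand side to $\af_\con(\downvar{\vec{x}}_1, \upvar{\vec{y}}_1) = (\downvar{\vec{p}}_1, \upvar{\vec{q}}_1)$, while its right-hand side reads $(\downvar{\vec{p}}_1 \sqcap \downvar{\vec{p}}_2, \upvar{\vec{q}}_1 \sqcup \upvar{\vec{q}}_2)$; equating components yields $\downvar{\vec{p}}_1 \sqcap \downvar{\vec{p}}_2 = \downvar{\vec{p}}_1$ and $\upvar{\vec{q}}_1 \sqcup \upvar{\vec{q}}_2 = \upvar{\vec{q}}_1$, i.e.\ $(\downvar{\vec{p}}_1, \upvar{\vec{q}}_1) \preceq (\downvar{\vec{p}}_2, \upvar{\vec{q}}_2)$. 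Thus $\af_\con$ is monotone for $\preceq$.

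With monotonicity established the inductive step is mechanical. Assuming $(\downvar{\vec{a}}, \upvar{\vec{a}}) \preceq (\downvar{\vec{a}}_{k-1}, \upvar{\vec{a}}_{k-1})$, applying the monotone map gives $\af_\con(\downvar{\vec{a}}, \upvar{\vec{a}}) \preceq \af_\con(\downvar{\vec{a}}_{k-1}, \upvar{\vec{a}}_{k-1}) = (\downvar{\vec{a}}_k, \upvar{\vec{a}}_k)$; since $(\downvar{\vec{a}}, \upvar{\vec{a}})$ is a fixed-point the left-hand side equals $(\downvar{\vec{a}}, \upvar{\vec{a}})$ itself, delivering $(\downvar{\vec{a}}, \upvar{\vec{a}}) \preceq (\downvar{\vec{a}}_k, \upvar{\vec{a}}_k)$ and hence \eqref{eq:best-solution}.

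I expect the only real obstacle to be the monotonicity derivation, and specifically the bookkeeping forced by the two coordinates carrying opposite orientations: the down-coerced part is ordered by the meet semilattice with $\nil$ on top, whereas the up-coerced part must be read through the reversed join order with $\none$ on top. The care is to keep $\sqcap$ bound to the down-components and $\sqcup$ to the up-components everywhere, and to verify that $\preceq$-comparability is exactly encoded by the absorption equalities $\downvar{\vec{x}}_1 \sqcap \downvar{\vec{x}}_2 = \downvar{\vec{x}}_1$ and $\upvar{\vec{y}}_1 \sqcup \upvar{\vec{y}}_2 = \upvar{\vec{y}}_1$ on the respective sides. Once $\preceq$ is fixed in this asymmetric way, \cref{lem:homomoprhism} converts comparability into these equalities uniformly, so no case analysis over term categories is needed and the fixed-point property closes the induction.
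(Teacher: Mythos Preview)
Your proof is correct and takes a genuinely different route from the paper's own argument. The paper proceeds in two stages: first it invokes the Knaster--Tarski theorem on the complete lattices $(\downvar{\widetilde{\terms}}_m,\rel)$ and $(\upvar{\widetilde{\terms}}_n,\rel)$ to assert the existence of a greatest fixed-point $(\downvar{\vec{s}},\upvar{\vec{s}})$, and then argues by contradiction---using \cref{lem:tightest} together with substitution monotonicity (\cref{prop:monotonicity})---that the chain cannot stabilise strictly below $(\downvar{\vec{s}},\upvar{\vec{s}})$. Your argument is instead the direct Kleene-style induction: you package the two semilattices into a single product order $\preceq$, derive $\preceq$-monotonicity of $\af_\con$ purely from the Homomorphism Lemma (\cref{lem:homomoprhism}) via the absorption characterisations in \cref{cor:semilattice}, and then the fixed-point inequality for every iterate follows by a one-line induction from the fact that $(\nil,\dots,\nil;\none,\dots,\none)$ is $\preceq$-maximal.

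What your approach buys is economy and weaker hypotheses: you never need lattice completeness, Knaster--Tarski, or \cref{lem:tightest}; the homomorphism property alone carries the argument. The paper's route, by contrast, yields slightly more as a by-product---namely that the limit of the chain \emph{is} the greatest fixed-point---though of course this also drops out of your inequality once one knows the chain stabilises. Either way, your derivation of monotonicity from homomorphism is the cleaner mechanism, and the orientation bookkeeping you flag (meet on the down-components, join on the up-components) is handled correctly throughout.
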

\begin{proof}
    The proof consists of two parts.  First, we prove that a fixed-point
    $(\downvar{\vec{s}}, \upvar{\vec{s}})$ with property
    (\ref{eq:best-solution}) exists.  Then we show that the $\af_\con$
    converges to $(\downvar{\vec{s}}, \upvar{\vec{s}})$, i.e.
    $(\downvar{\vec{a}}_k, \upvar{\vec{a}}_k) = (\downvar{\vec{s}},
    \upvar{\vec{s}})$.
    \begin{description}
        \item[Existence] $(\downvar{\widetilde{\terms}}_m, \rel)$ and
            $(\upvar{\widetilde{\terms}}_n, \rel)$ are complete lattices and
            $\af_\con$ is an order-preserving function.  By Knaster-Tarski
            theorem~\cite{tarski1955lattice}, the sets of fixed points of
            $\af_\con$ in $(\downvar{\widetilde{\terms}}_m, \rel)$ and
            $(\upvar{\widetilde{\terms}}_n, \rel)$ are complete lattices too.
            Therefore, there exists the fixed-point $(\downvar{\vec{s}},
            \upvar{\vec{s}})$ such that for any fixed-point
            $(\downvar{\vec{\overline{s}}}, \upvar{\vec{\overline{s}}})$,
            $\downvar{\vec{\overline{s}}} \rel \downvar{\vec{s}}$ and
            $\downvar{\vec{s}} \rel \downvar{\vec{\overline{s}}}$.
    \item[Reachability]
            Proof by contradiction.  Assume that $\af_\con$ does not converge
            to $(\downvar{\vec{s}}, \upvar{\vec{s}})$, i.e.
            $(\downvar{\vec{a}}_k, \upvar{\vec{a}}_k)
            = (\downvar{\vec{\overline{s}}}, \upvar{\vec{\overline{s}}})$,
            where $(\downvar{\vec{\overline{s}}}, \upvar{\vec{\overline{s}}})
            \neq (\downvar{\vec{s}}, \upvar{\vec{s}})$, and
            $\downvar{\vec{\overline{s}}} \rel \downvar{\vec{s}}$ or
            $\downvar{\vec{s}} \rel \downvar{\vec{\overline{s}}}$.  Let
            $\downvar{\vec{\overline{s}}} \rel \downvar{\vec{s}}$ (the case
            when $\downvar{\vec{s}} \rel \downvar{\vec{\overline{s}}}$ is
            considered similarly).

            Let $(\downvar{\vec{a}}_{i-1}, \upvar{\vec{a}}_{i-1})$ be the
            approximation that precedes $(\downvar{\vec{\overline{s}}},
            \upvar{\vec{\overline{s}}})$ in the chain of approximations:
            $\af_\con(\downvar{\vec{a}}_{i-1}, \upvar{\vec{a}}_{i-1})
            = (\downvar{\vec{\overline{s}}}, \upvar{\vec{\overline{s}}})$.
            For every constraint $t_1 \rel t_2 \in \con$
            \[
                t_1[\downvar{\vec{v}}/\downvar{\vec{\overline{s}}},\upvar{\vec{v}}/\upvar{\vec{\overline{a}}}_{i-1}] \rel
                t_2[\downvar{\vec{v}}/\downvar{\vec{a}}_{i-1},\upvar{\vec{v}}/\upvar{\vec{\overline{s}}}]\text{.}
            \]
            Since $\downvar{\vec{s}}$ is a fixed point, then
            \[
                t_1[\downvar{\vec{v}}/\downvar{\vec{s}},\upvar{\vec{v}}/\upvar{\vec{a}}_{i-1}] \rel
                t_2[\downvar{\vec{v}}/\downvar{\vec{a}}_{i-1},\upvar{\vec{v}}/\upvar{\vec{s}}]\text{.}
            \]
            On the other hand, $\downvar{\vec{\overline{s}}} \rel
            \downvar{\vec{s}}$. Due to substitution monotonicity
            (\cref{prop:monotonicity}),
            \begin{equation}
                \label{eq:contradiction}
                t_1[\downvar{\vec{v}}/\downvar{\vec{\overline{s}}},\upvar{\vec{v}}/\upvar{\vec{a}}_{i-1}] \rel t_1[\downvar{\vec{v}}/\downvar{\vec{s}},\upvar{\vec{v}}/\upvar{\vec{a}}_{i-1}]\text{.}
            \end{equation}
            It contradicts \cref{lem:tightest}, which states that $\af_\con$
            produces the ``tightest'' approximation
            $\downvar{\vec{\overline{s}}}$, but according to
            (\ref{eq:contradiction}) it follows that $\downvar{\vec{s}}$ is the
            ``tightest''.

            Therefore, $\af_\con$ converges to the fixed point
            $\downvar{\vec{s}}$ and no $\downvar{\vec{\overline{s}}}$ exists
            such that $\downvar{\vec{s}} \rel \downvar{\vec{\overline{s}}}$.
    \end{description}
\end{proof}

\begin{theorem}[Correctness]
    \label{thm:cspkpn2-correct}
    For any set of constraints $\con$ such that $\bvars(\con) = \emptyset$,
    \cspws{} for $\con$ is unsatisfiable iff \cref{alg:cspkpn2} returns $\unsat$.
\end{theorem}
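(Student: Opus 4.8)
The plan is to factor the whole proof through a single key observation and then read off both directions of the equivalence. Since $\bvars(\con) = \emptyset$, every approximation the algorithm manipulates is a vector of ground terms and the Boolean component of a solution is empty, so a solution of \cspws{} (\cref{def:cspws}) is just a pair $(\downvar{\vec{t}}, \upvar{\vec{t}})$ of ground-term vectors satisfying every constraint. The observation I aim to establish is the \emph{correspondence}: a pair $(\downvar{\vec{a}}, \upvar{\vec{a}}) \neq (\tbot, \ttop)$ is a solution if and only if it is a fixed point of $\af_\con$, that is $\af_\con(\downvar{\vec{a}}, \upvar{\vec{a}}) = (\downvar{\vec{a}}, \upvar{\vec{a}})$. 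Granting it, correctness is immediate: by \cref{thm:termination} the loop halts, and its exit test $(\downvar{\vec{a}}_i, \upvar{\vec{a}}_i) = (\downvar{\vec{a}}_{i-1}, \upvar{\vec{a}}_{i-1})$ is precisely the assertion that the returned value $(\downvar{\vec{a}}_k, \upvar{\vec{a}}_k)$ is a fixed point of $\af_\con$.

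First I would prove the correspondence. Fix a constraint $\constr{t_1}{t_2} \in \con$ and an approximation distinct from $(\tbot, \ttop)$. By \cref{cor:semilattice}, for down-coerced ground terms $s \rel s'$ holds iff $s \sqcap s' = s$, and for up-coerced ones $s \rel s'$ iff $s \sqcup s' = s'$. Inspecting each clause of $\af$ in \cref{sec:approximating-function}, the new value given to a variable component $\overline{a}_l$ is obtained by meeting (respectively joining) its old value with the ground term on the opposite side of the constraint; hence the component is left unchanged precisely when the corresponding seniority inequality already holds. The tuple and record clauses recurse into the nested terms and return the meet (join) of the results, and since each recursive call can only lower the down-coerced part and raise the up-coerced part, this meet (join) equals the input exactly when every nested call already leaves it unchanged, i.e.\ when every nested constraint holds. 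The ``otherwise'' record clause returns $(\tbot, \ttop)$ exactly when the label-inclusion condition of \cref{def:seniority} fails, and is therefore never fixed by a non-$(\tbot,\ttop)$ pair. Collecting these facts, $\af_\con$ fixes $(\downvar{\vec{a}}, \upvar{\vec{a}})$ iff every constraint of $\con$ is satisfied under the substitution $\downvar{\vec{v}}/\downvar{\vec{a}}, \upvar{\vec{v}}/\upvar{\vec{a}}$, which is exactly the definition of a solution.

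With the correspondence in hand, both implications follow. For the ``if'' direction (return of $\unsat$ implies unsatisfiability), suppose the algorithm returns $\unsat$, so $(\downvar{\vec{a}}_k, \upvar{\vec{a}}_k) = (\tbot, \ttop)$, and assume toward a contradiction that a solution $(\downvar{\vec{t}}, \upvar{\vec{t}})$ exists. By the correspondence it is a fixed point of $\af_\con$, so \cref{lem:max-fixed-point} yields $\downvar{\vec{t}} \rel \downvar{\vec{a}}_k = \tbot$ and $\ttop = \upvar{\vec{a}}_k \rel \upvar{\vec{t}}$. As $\tbot$ is the minimum of $(\downvar{\widetilde{\terms}}_m, \rel)$ and $\ttop$ the maximum of $(\upvar{\widetilde{\terms}}_n, \rel)$, this forces $\downvar{\vec{t}} = \tbot$ and $\upvar{\vec{t}} = \ttop$, contradicting that a solution consists of genuine ground terms. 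For the converse (unsatisfiability implies return of $\unsat$), the returned $(\downvar{\vec{a}}_k, \upvar{\vec{a}}_k)$ is always a fixed point; were it different from $(\tbot, \ttop)$ the correspondence would make it a solution, so if \cspws{} is unsatisfiable it must equal $(\tbot, \ttop)$ and the algorithm returns $\unsat$.

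I expect the main obstacle to be the correspondence step, and in particular verifying it uniformly across every clause of $\af$. The most delicate clauses are those for records carrying a tail variable $\overline{v}_l$, where a label of $t_2$ with no counterpart in $t_1$ is absorbed by meeting it into the tail component $\overline{a}_l$ rather than triggering failure, together with the reduction-by-substitution clause used when $t_2$ itself carries a tail $\overline{u}_r$. For these one must confirm that the meet (respectively join) performed by $\af$ still coincides exactly with the componentwise seniority test of \cref{def:seniority}, so that ``fixed under $\af$'' and ``satisfies the constraint'' remain equivalent even in the presence of extensible tails. The remaining ingredients are available off the shelf: termination as \cref{thm:termination}, maximality of the computed fixed point as \cref{lem:max-fixed-point}, and the meet/join characterisation of $\rel$ as \cref{cor:semilattice}.
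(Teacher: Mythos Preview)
Your proposal is correct and follows essentially the same route as the paper: both directions hinge on identifying non-$(\tbot,\ttop)$ fixed points of $\af_\con$ with solutions of $\con$, and then invoking \cref{lem:max-fixed-point} to conclude that if the algorithm reaches $(\tbot,\ttop)$ there can be no other fixed point (hence no solution). The only difference is that you spell out the fixed-point/solution correspondence by a case analysis over the clauses of $\af$, whereas the paper simply asserts it; your version is thus a more detailed execution of the same argument rather than a different one.
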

\begin{proof}
    Proof by contradiction.

    ($\Rightarrow$) Let $\con$ be an unsatisfiable set of constraints and
    \cref{alg:cspkpn2} returns $(\downvar{\vec{s}}, \upvar{\vec{s}})$ such that
    $(\downvar{\vec{s}}, \upvar{\vec{s}}) \neq (\tbot, \ttop)$.
    $(\downvar{\vec{s}}, \upvar{\vec{s}})$ is the fixed point that contains
    values satisfying $\con$.  This contradicts the initial hypothesis.
    Therefore, \cref{alg:cspkpn2} returns $\unsat$ if $\con$ is unsatisfiable.

    ($\Leftarrow$) Let \cref{alg:cspkpn2} return $\unsat$ and $\con$ has
    a solution.  In this case the chain of approximations in \cref{alg:cspkpn2}
    returns $(\tbot, \ttop)$.  This is the fixed point and by
    \cref{lem:max-fixed-point} no other fixed point
    $(\downvar{\vec{\overline{s}}}, \upvar{\vec{\overline{s}}})$ exists such
    that $\tbot \rel \downvar{\vec{\overline{s}}}$ or
    $\upvar{\vec{\overline{s}}} \rel \ttop$, which means that no fixed points
    apart from $(\tbot, \ttop)$ exists.  This contradicts the initial hypothesis.
    Therefore, $\con$ is unsatisfiable if \cref{alg:cspkpn2} returns $\unsat$.
\end{proof}

\section{\cspws{} Algorithm}
\label{sec:algorithm}

A straightforward algorithm for $\cspws{}$ has to run  \cref{alg:cspkpn2} for
each of $2^l$ pairs of the semi-lattices, where $l=|\bvars(\con)|$.  Instead,
we present   iterative \cref{alg:cspkpn} which takes the advantage   of the
order-theoretical structure of the MDL and  generates an adjunct SAT problem on
the way.


\begin{algorithm}[htb]
    \caption{$\cspws(\con)$}
\label{alg:cspkpn}
\begin{algorithmic}[1]
    \State$c \gets |\con|$
    \State$i \gets 0$
    \State$\ssat_{0} \gets \emptyset$
    \State$\downvar{\vec{a}}_{0} \gets (\nil,\dots,\nil)$
    \State$\upvar{\vec{a}}_{0} \gets (\none,\dots,\none)$
    \Repeat \label{alg:cspkpn-loop-start}
       \State$i \gets i + 1$
    	\State$(\downvar{\vec{a}}_i, \upvar{\vec{a}}_i) \gets \af_\con(\downvar{\vec{a}}_{i-1}, \upvar{\vec{a}}_{i-1})$
        \State$\ssat_i \gets \ssat_{i-1} \cup \bigcup\limits_{t_1 \rel t_2 \in \con}  (\wf(t_1[\vec{v}/\vec{a}_i]) \cup \wf(t_2[\vec{v}/\vec{a}_i]) \cup \sen(t_1[\vec{v}/\vec{a}_i] \rel t_2[\vec{v}/\vec{a}_i]))$
    \Until{$(\mathsf{SAT}(\ssat_{i}), \downvar{\vec{a}}_i, \upvar{\vec{a}}_i) = (\mathsf{SAT}(\ssat_{i-1}), \downvar{\vec{a}}_{i-1}, \upvar{\vec{a}}_{i-1})$}
    \If{$\ssat_{i}$ is unsatisfiable}
        \State \Return$\unsat$
    \Else
        \State \Return$(\vec{b}, \downvar{\vec{a}}_i[\vec{f}/\vec{b}], \upvar{\vec{a}}_i[\vec{f}/\vec{b}])$, where $\vec{b} \in \mathsf{SAT}(\ssat_i)$
    \EndIf
\end{algorithmic}
\end{algorithm}

Let $\ssat_0\subseteq \ssat_1\subseteq \dots \subseteq \ssat_s$ be sets of
Boolean constraints, and $\downvar{\vec{a}}$ and $\upvar{\vec{a}}$ be vectors
of semiground terms such that $|\downvar{\vec{a}}| = |\dtvars(\con)|$ and
$|\upvar{\vec{a}}| = |\utvars(\con)|$. 
We seek the solution as a fixed point of a chain of approximations in the
following form:
\[
    (\ssat_0, \downvar{\vec{a}}_0, \upvar{\vec{a}}_0), \ldots,
    (\ssat_{s-1}, \downvar{\vec{a}}_{s-1}, \upvar{\vec{a}}_{s-1}),
    (\ssat_s, \downvar{\vec{a}}_s, \upvar{\vec{a}}_s)\text{,}
\]
where for every $1\leq i \le s$
and a vector of Boolean values $\vec{b}$ that is a solution to
$\mathsf{SAT}(\ssat_i)$:
\[
    \downvar{\vec{a}}_{i}[\vec{f}/\vec{b}] \rel \downvar{\vec{a}}_{i-1}[\vec{f}/\vec{b}]
    \qquad\text{and}\qquad
    \upvar{\vec{a}}_{i-1}[\vec{f}/\vec{b}] \rel \upvar{\vec{a}}_{i}[\vec{f}/\vec{b}]\text{.}
\]


The adjunct set of Boolean constraints potentially expands at every iteration
of the algorithm by inclusion of further logic formulas produced by the set of Boolean constraint  $\wf$  (see \cref{fig:wf-constraints}) ensuring  well-formdness of the terms  and
the set of Boolean constraints  $\sen$ (see  \cref{fig:b-sen-con}) ensuring that the seniority relations holds.  
The starting point is $\ssat_0=\emptyset$,
$\downvar{\vec{a}}_0=(\nil,\dots,\nil)$,
$\upvar{\vec{a}}_0=(\none,\dots,\none)$ and the chain terminates as soon as
$\mathsf{SAT}(\ssat_s) = \mathsf{SAT}(\ssat_{s-1})$, $\upvar{\vec{a}}_{s}
= \upvar{\vec{a}}_{s-1}$, $\downvar{\vec{a}}_{s} = \downvar{\vec{a}}_{s-1}$, where by $\mathsf{SAT}(\ssat_i)$ we mean
a set of Boolean vector satisfying $\ssat_i$. 
Whether the set of Boolean constraints actually expands or not can be
determined by checking the satisfiability of $\mathsf{SAT}(\ssat_i)\neq
\mathsf{SAT}(\ssat_{i-1})$ for the current iteration $i$.


\begin{figure}[t]
\fbox{\begin{minipage}{\textwidth}
    \normalfont
    \begin{enumerate}
        \item $\wf(t) = \emptyset$ if $t$ is a symbol;
        \item $\wf(t) = \bigcup_{1 \leq i \leq n} \wf(t_i)$ if $t$ is
            a tuple $\tuple{t_1 \dots t_n}$;
        \item $\wf(t) = \{\lnot (g_i \land g_j)~|~1 \leq i \neq j \leq n\text{
            and } l_i = l_j\} \cup \bigcup_{1 \leq i \leq n} \{g_i \to g~|~g
            \in \wf(t_i)\}$ if $t$ is a record
            $\record{\el{l_1}{g_1}{t_1},\dots,\el{l_n}{g_n}{t_n}}{}$ or
            a choice $\choice{\el{l_1}{g_1}{t_1},\dots,\el{l_n}{g_n}{t_n}}{}$;
        \item $\wf(t) = \{\lnot (g_i \land g_j)~|~1 \leq i \neq j \leq n\} \cup
            \{\bigvee_{1 \leq i \leq n} g_i\} \cup \bigcup_{1 \leq
            i \leq n} \{g_i \to g~|~g \in \wf(t_i)\}$ if $t$ is a switch
            $\switch{\el{}{g_1}{t_1},\dots,\el{}{g_n}{t_n}}$.
    \end{enumerate}
\end{minipage}}
\caption{The set of Boolean constraints that ensures well-formedness of a  term $t$}
\label{fig:wf-constraints}
\end{figure}

\begin{figure}[t]
\fbox{\begin{minipage}{\textwidth}
    \normalfont
    \label{defn:b-sen-con}
    \begin{enumerate}
        \item $\sen(t_1 \rel t_2) = \emptyset$, if $t_1$ and $t_2$ are equal symbols.
        \item $\sen(t_1 \rel t_2) = \bigcup_{1 \leq i \leq k} \sen(t^1_i \rel
            t^2_i)$, if $t_1$ is a tuple $\tuple{t^1_1\dots t^1_k}$ and $t_2$
            is a tuple $\tuple{t^2_1\dots t^2_k}$;
        \item $\sen(t_1 \rel t_2) = \bigcup_{1 \leq j \leq m} \sen_j(t^2_j)$,
            if $t_1$ is a record
            $\record{\el{l^1_1}{g^1_1}{t^1_1},\dots,\el{l^1_k}{g^1_k}{t^1_k}}{}$,
            $t_2$ is a record
            $\record{\el{l^1_2}{g^2_1}{t^2_1},\dots,\el{l^2_m}{g^2_m}{t^2_m}}{}$
            and $\sen_j(t^2_j)$ is one of the following:
            \begin{enumerate}
                \item $\sen_j(t^2_j) = \{(g^1_i \land g^2_j) \to g~|~g \in
                    \sen(t^1_i \rel t^2_j)\}$, if $\exists i: 1 \leq i \leq k$
                    and $l^1_i = l^2_j$;
                \item $\sen_j(t^2_j) = \{\lnot g^2_j\}$, otherwise;
            \end{enumerate}
        \item $\sen(t_1 \rel t_2) = \bigcup_{1 \leq i \leq m} \sen_i(t^1_i)$,
            if $t_1$ is a choice
            $\choice{\el{l^1_1}{g^1_1}{t^1_1},\dots,\el{l^1_k}{g^1_k}{t^1_k}}{}$,
            $t_2$ is a choice
            $\choice{\el{l^2_1}{g^2_1}{t^2_1},\dots,\el{l^2_m}{g^2_m}{t^2_m}}{}$
            and $\sen_i(t^1_i)$ is one of the following:
            \begin{enumerate}
                \item $\sen_i(t^1_i) = \{(g^1_i \land g^2_j) \to g~|~g \in
                    \sen(t^1_i \rel t^2_j)\}$, if $\exists j: 1 \leq i \leq m$
                    and $l^1_i = l^2_j$;
                \item $\sen_i(t^1_i) = \{\lnot g^1_i\}$, otherwise;
            \end{enumerate}
        \item $\sen(t_1 \rel t_2) = \{g^1_i \to g~|~1 \leq i \leq k\text{ and
            } g \in \sen(t^1_i \rel t^2_i)\}$, if $t_1$ is a switch
            $\switch{\el{}{g^1_1}{t^1_1},\dots\el{}{g^1_k}{t^1_k}}$ and $t_2$
            is an arbitrary term.
        \item $\sen(t_ 1\rel t_2) = \{g^2_i \to g~|~1 \leq i \leq k\text{ and
            } g \in \sen(t_1 \rel t^2_i)\}$, if $t_1$ is an arbitrary term and
            $t_2$ is a switch
            $\switch{\el{}{g^2_1}{t^2_1},\dots\el{}{g^2_k}{t^2_k}}$.
        \item $\sen(t_1 \rel t_2) = \{\false\}$, otherwise.
    \end{enumerate}
\end{minipage}}
\caption{The set of Boolean constraints that ensures the seniority relation  $t_1 \rel t_2$}
\label{fig:b-sen-con}
\end{figure}

We argue that if the original \cspws{} is satisfiable then so is
$\mathsf{SAT}(\ssat_s)$ and that the tuple of vectors $(\vec{b}_s,
\downvar{\vec{a}}_s[\vec{f}/\vec{b_s}], \upvar{\vec{a}}_s[\vec{f}/\vec{b_s}])$
is a solution to the former, where $\vec{b_s}$ is a solution of
$\mathsf{SAT}(\ssat_s)$.  In other words, the iterations terminate when the
conditional approximation limits the t-variables, and when the adjunct SAT
constrains the b-variables enough to ensure the satisfaction of all \cspws{}
constraints.  In general, the set $\mathsf{SAT}(\ssat_s)$ can have more than
one solution and we select one of them.  Heuristics that allows to choose
a solution that is better for the given application is left for further
research.

{\bf Implementation.} We implemented the $\cspws{}$ algorithm as a solver in
the OCaml language. The input for the solver is a set of constraints and the
output is in the form of assignments to b-variables and t-variables.  It works
on top of the PicoSAT~\cite{picosat} library (although any other SAT solver
could be used instead). PicoSAT is employed as a subsolver that deals with
Boolean assertions.

\section{Conclusion and Future Work}
\label{sec:conclusion}

We have presented a new mechanism for choreographing service interfaces based
on CSP and SAT that configures generic non-local interfaces in the context.  We
developed a Message Definition Language that can be used in the context of
service-based applications.  Our mechanism supports subtyping, polymorphism and
inheritance, thanks to the order relation defined on MDL terms.  We presented
the CSP solution algorithm for interface configuration, which has been
developed specifically for this problem.

In the context of Cloud, our results may prove useful to the
software-as-service community since we can support much more generic interfaces
than are currently available.  Building services the way we do could enable
service providers to configure a solution for a network customer based on
services that they have at their disposal as well as those provided by other
providers and the customer themselves, all solely on the basis of interface
definitions and automatic tuning to non-local requirements.

The next step will be the design of a mechanism for automatic interface
derivation from code of the services, which can be done in a straightforward
manner.  This brings an advantage over choreography mechanisms that rely on
behavioural protocols: automatic derivation of the behaviour from the code is
a difficult problem that have not been solved yet.

\bibliographystyle{splncs03}
\bibliography{paper}

\end{document}